% compile with pdflatex
\documentclass[a4paper,oneside]{scrartcl}

\newcommand{\Title}{Complexity Results for Modal Dependence Logic}
\newcommand{\Author}{
Peter Lohmann%
\thanks{
Leibniz University Hannover,
Theoretical Computer Science,
Appelstr.~4, 30167~Hannover, Germany,
\{lohmann,vollmer\}@thi.uni-hannover.de},
Heribert Vollmer\footnotemark[1]
}
\newcommand{\PDFAuthor}{Peter Lohmann, Heribert Vollmer}
\newcommand{\Keywords}{dependence logic, modal logic, satisfiability problem, computational complexity, poor man's logic}
\newcommand{\SubjectClassifiers}{F.2.2 Complexity of proof procedures; F.4.1 Modal logic}

\usepackage{amsmath, amssymb, amsthm}
\usepackage{xspace}
\usepackage{xcolor}
\usepackage{enumerate}
\usepackage{ifthen}
\usepackage{fancyhdr}
\usepackage{listings}
  \lstdefinelanguage{pseudo}{
    morekeywords={if,elseif,then,return,end,choose,guess},
    morecomment=[l]{//}}
  \lstset{
    mathescape,
    language=pseudo,
    basicstyle=\ttfamily,
    keywordstyle=\bfseries}
%     numbers=left,
%     numberstyle=\small}
\usepackage{tikz}
  \usetikzlibrary{automata}

% Theorem environments
\theoremstyle{definition}
\newtheorem{definition}{Definition}[section]
\theoremstyle{plain}
\newtheorem{theorem}[definition]{Theorem}
\newtheorem{corollary}[definition]{Corollary}
\newtheorem{lemma}[definition]{Lemma}

% the to-do command

% Aufrechte griechische Symbole
% \DeclareMathSymbol{\Gamma}{\mathalpha}{operators}{"00}
% \DeclareMathSymbol{\Delta}{\mathalpha}{operators}{"01}
% \DeclareMathSymbol{\Theta}{\mathalpha}{operators}{"02}
% \DeclareMathSymbol{\Lambda}{\mathalpha}{operators}{"03}
% \DeclareMathSymbol{\Xi}{\mathalpha}{operators}{"04}
\DeclareMathSymbol{\Pi}{\mathalpha}{operators}{"05}
\DeclareMathSymbol{\Sigma}{\mathalpha}{operators}{"06}
% \DeclareMathSymbol{\Upsilon}{\mathalpha}{operators}{"07}
% \DeclareMathSymbol{\Phi}{\mathalpha}{operators}{"08}
% \DeclareMathSymbol{\Psi}{\mathalpha}{operators}{"09}
% \DeclareMathSymbol{\Omega}{\mathalpha}{operators}{"0A}

% Festlegen der Schriftarten 
\newcommand{\numberClassFont}[1]{\mathbb{#1}}     % N, Z, Q, ...
\newcommand{\complexityClassFont}[1]{\mathrm{#1}} % P, NP, NL, ...
\newcommand{\logicClFont}[1]{\mathsf{#1}}        % LTL, CTL, ...
\newcommand{\problemFont}[1]{\mathrm{#1}}         % SAT, TAUT, IMP, ...
\newcommand{\mathCommandFont}[1]{\mathrm{#1}}     % max, min, ...

% Zahlenmengen
\newcommand{\N}{\protect\ensuremath{\numberClassFont{N}}\xspace}
% \newcommand{\Z}{\protect\ensuremath{\numberClassFont{Z}}\xspace}
% \newcommand{\Q}{\protect\ensuremath{\numberClassFont{Q}}\xspace}
% \newcommand{\R}{\protect\ensuremath{\numberClassFont{R}}\xspace}
% \newcommand{\C}{\protect\ensuremath{\numberClassFont{C}}\xspace}

% allgemeine Operatoren
% \newcommand{\size}[1]{{\protect\ensuremath{\vert\nobreak#1\nobreak\vert}}}
% \newcommand{\const}{{\protect\ensuremath\mathord\mathCommandFont{const.}}}
% \newcommand{\limplies}{\protect\ensuremath\rightarrow}
% \newcommand{\powerset}[1]{{\protect\ensuremath{\mathord{\mathfrak{P}(\nobreak#1\nobreak)}}}}
% \newcommand{\dual}[1]{{\protect\ensuremath{\mathCommandFont{dual}(\nobreak#1\nobreak)}}}
% \newcommand{\id}{{\protect\ensuremath{\mathCommandFont{id}}}}
% \newcommand{\promise}{{\protect\ensuremath{_{\cal P}}}}
\newcommand{\bigO}[1]{\protect\ensuremath{{O(#1)}}}
% \newcommand{\encoding}[1]{\ensuremath\langle#1\rangle}
% \newcommand{\theorems}[1]{\ensuremath\mathCommandFont{Th}(#1)}
% \newcommand{\ksim}{\ensuremath\mathrel{\overset{\scriptscriptstyle k}\leadsto}}

% div. Komplexitätsklassen
\newcommand{\co}{{\protect\ensuremath{\complexityClassFont{co}}}}
\newcommand{\trivial}{\protect\ensuremath{\complexityClassFont{trivial}}\xspace}
\newcommand{\coNP}{\protect\ensuremath{\co\NP}\xspace}
\newcommand{\PTIME}{\protect\ensuremath{\complexityClassFont{P}}\xspace}
\newcommand{\NP}{\protect\ensuremath{\complexityClassFont{NP}}\xspace}

\newcommand{\PiTwo}{\protect\ensuremath{\complexityClassFont{\Pi_2^p}}\xspace}
\newcommand{\SigmaTwo}{\protect\ensuremath{\complexityClassFont{\Sigma_2^p}}\xspace}
\newcommand{\SigmaThree}{\protect\ensuremath{\complexityClassFont{\Sigma_3^p}}\xspace}
\newcommand{\PSPACE}{\protect\ensuremath{\complexityClassFont{PSPACE}}\xspace}
\newcommand{\NEXPTIME}{\ensuremath{\complexityClassFont{NEXP}}\-\ensuremath{\complexityClassFont{TIME}}\xspace}

% Reduktionen
\newcommand{\leqpm}{\protect\ensuremath{\leq^\mathCommandFont{p}_\mathCommandFont{m}}}

% Logische Operatoren und Konnektoren
\newcommand{\true}{\protect\textnormal{\raisebox{-1pt}{$\top$}}}
\newcommand{\false}{\protect\textnormal{\raisebox{-1pt}{$\bot$}}}
\newcommand{\aneg}{\protect\ensuremath{\overline{\,\cdot\,}}\xspace}
\newcommand{\dep}[2][]{\protect\ensuremath{\mathCommandFont{=}\ifthenelse{\equal{#1}{}}{(#2)}{(#1,#2)}}\xspace}
\newcommand{\norop}{\mathrel{\bigcirc\kern -0.845em\text{\raisebox{-0.2ex}[0ex][0ex]{$\vee$}}}}
\newcommand{\nor}{\protect\ensuremath{\mathord{\norop}}\xspace}
\newcommand{\bignordisplay}[2][]{\protect\ensuremath{\mathord{\underset{#2}{\overset{#1}{\text{{\Large$\norop$}}}}}}}
\newcommand{\bignor}[2][]{\protect\ensuremath{\mathord{\text{{\Large$\norop$}}_{#2}^{#1}}}}

%Logiken
\newcommand{\MDL}[1][]{\ensuremath{\logicClFont{MDL}\ifx#1\else(#1)\fi}\xspace}
\newcommand{\MDLpara}[2][]{\ensuremath{\logicClFont{MDL}_{#2}\ifx#1\else(#1)\fi}\xspace}
\newcommand{\MDLk}[1][]{\MDLpara[#1]{k}}
\newcommand{\CTL}[1][]{\ensuremath{\logicClFont{CTL}\ifx#1\else(#1)\fi}\xspace}
\newcommand{\LTL}[1][]{\ensuremath{\logicClFont{LTL}\ifx#1\else(#1)\fi}\xspace}
\newcommand{\CTLs}[1][]{\ensuremath{\logicClFont{\CTL^*}\ifx#1\else(#1)\fi}\xspace}

% Probleme

\newcommand{\MDLSAT}[1][]{\ensuremath{\problemFont{MDL}\text{-}}\allowbreak\ensuremath{\problemFont{SAT}\ifx#1\else(#1)\fi}\xspace}
\newcommand{\MDLSATpara}[2][]{\ensuremath{\problemFont{MDL}_{#2}\text{-}}\allowbreak\ensuremath{\problemFont{SAT}\ifx#1\else(#1)\fi}\xspace}
\newcommand{\MDLSATk}[1][]{\MDLSATpara[#1]{k}}
\newcommand{\oneinthree}{\protect\ensuremath{\problemFont{R_{1/3}}}\xspace}
\newcommand{\SpecialQBF}{\protect\ensuremath{\problemFont{QCSP_2(}}\allowbreak\ensuremath{\oneinthree)}\xspace}
\newcommand{\cnf}{\protect\ensuremath{\problemFont{3CNF}}\xspace}
\newcommand{\onecnf}{\protect\ensuremath{\problemFont{1CNF}}\xspace}
\newcommand{\qbfcnf}{\protect\ensuremath{\problemFont{\cnf\text{-}QBF_3}}\xspace}
\newcommand{\dqbf}{\protect\ensuremath{\problemFont{DQBF}}\xspace}
\newcommand{\DQBFCNF}{\protect\ensuremath{\problemFont{\cnf\text{-}\dqbf}}\xspace}

% Operatoren
\newcommand{\existOperator}{\protect\ensuremath{\exists\cdot}}
\newcommand{\powerset}[1]{\protect\ensuremath{\mathcal{P}}(#1)\xspace}

% Symbole

\newcommand{\calC}{\protect\ensuremath{\mathcal{C}}\xspace}

% PDFLatex Settings
\usepackage[
%   backref=page,pdfborder={0 0 0},
  colorlinks=false,pdfdisplaydoctitle=false,pdfkeywords={\Keywords},pdftitle={\Title},pdfauthor={\PDFAuthor}
]{hyperref}

\begin{document}

\title{\Title}

\author{\Author}

\maketitle

\thispagestyle{fancy}
\renewcommand{\headrulewidth}{0pt}
\renewcommand{\footrulewidth}{0pt}
\fancyhf{}
\fancyfoot[L]{\scriptsize This work was partly supported by the NTH Focused Research School for IT Ecosystems, by DFG VO 630/6-1, and by a DAAD PPP grant.}

\begin{abstract}
\small
Modal dependence logic was introduced recently by V\"a\"an\"anen. It enhances the basic modal language by an operator $\dep{}$. For propositional variables $p_{1},\dots,p_{n}$, $\dep[p_{1},\dots,p_{n-1}]{p_{n}}$ intuitively states that the value of $p_{n}$ is determined by those of $p_{1},\dots,p_{n-1}$. Sevenster (J.~Logic and Computation, 2009) showed that satisfiability for modal dependence logic is complete for nondeterministic exponential time. 

In this paper we consider fragments of modal dependence logic obtained by restricting the set of allowed propositional connectives. We show that satisfibility for \emph{poor man's dependence logic}, the language consisting of formulas built from literals and dependence atoms using $\wedge$, $\Box$, $\Diamond$ (i.\,e., disallowing disjunction), remains \NEXPTIME-complete. If we only allow monotone formulas (without negation, but with disjunction), the complexity drops to \PSPACE-completeness.

We also extend V\"a\"an\"anen's language by allowing classical disjunction besides dependence disjunction and show that the satisfiability problem remains \NEXPTIME-complete. If we then disallow both negation and dependence disjunction, satistiability is complete for the second level of the polynomial hierarchy.
Additionally we consider the restriction of modal dependence logic where the length of each single dependence atom is bounded by a number that is fixed for the whole logic. We show that the satisfiability problem for this bounded arity dependence logic is \PSPACE-complete and that the complexity drops to the third level of the polynomial hierarchy if we then disallow disjunction.

In this way we completely classifiy the computational complexity of the satisfiability problem for all restrictions of propositional and dependence operators considered by V\"a\"an\"anen and Sevenster.

A short version of this was presented at CSL 2010 \cite{lovo10}.
\end{abstract}
%text abstract
% Modal dependence logic was introduced recently by Väänänen. It enhances the basic modal language by an operator =(). For propositional variables p_1,...,p_n, =(p_1,...,p_(n-1);p_n) intuitively states that the value of p_n is determined by those of p_1,...,p_(n-1). Sevenster (J. Logic and Computation, 2009) showed that satisfiability for modal dependence logic is complete for nondeterministic exponential time.
% In this paper we consider fragments of modal dependence logic obtained by restricting the set of allowed propositional connectives. We show that satisfibility for poor man's dependence logic, the language consisting of formulas built from literals and dependence atoms using conjunction, necessity and possibility (i.e., disallowing disjunction), remains NEXPTIME-complete. If we only allow monotone formulas (without negation, but with disjunction), the complexity drops to PSPACE-completeness. We also extend Väänänen's language by allowing classical disjunction besides dependence disjunction and show that the satisfiability problem remains NEXPTIME-complete. If we then disallow both negation and dependence disjunction, satistiability is complete for the second level of the polynomial hierarchy.
% In this way we completely classify the computational complexity of the satisfiability problem for all restrictions of propositional and dependence operators considered by Väänänen and Sevenster.

\smallskip
\noindent{\bfseries ACM Subject Classifiers:} \SubjectClassifiers

\smallskip
\noindent{\bfseries Keywords:} \Keywords

\section{Introduction}

The concept of extending first-order logic with partially ordered quantifiers, and hence expressing some form of independence between variables, was first introduced by Henkin \cite{he61}. Later, Hintikka and Sandu developed independence friendly logic \cite{hisa89} which can be viewed as a generalization of Henkin's logic. Recently, Jouko V\"a\"an\"anen introduced the dual notion of functional dependence into the language of first-order logic \cite{va07}. In the case of first-order logic, the independence and the dependence variants are expressively equivalent.

Dependence among values of variables occurs everywhere in computer science (data\-bases, software engineering, knowledge representation, AI) but also the social sciences (human history, stock markets, etc.), and thus dependence logic is nowadays a much discussed formalism in the area called \emph{logic for interaction.} 
Functional dependence of the value of a variable $p_{n}$ from the values of the variables $p_{1},\dots,p_{n-1}$ states that there is a function, say $f$, such that $p_{n}=f(p_{1},\dots,p_{n-1})$, i.\,e., the value of $p_{n}$ only depends on those of $p_{1},\dots,p_{n-1}$. We will denote this in this paper by $\dep[p_{1},\dots,p_{n-1}]{p_{n}}$. 

Of course, dependence does not manifest itself in a single world, play, event or observation. Important for such a dependence to make sense is a collection of such worlds,  plays, events or observations. These collections are called \emph{teams}.
They are the basic objects in the definition of semantics of dependence logic.
A team can be a set of plays in a game. Then $\dep[p_{1},\dots,p_{n-1}]{p_{n}}$ intuitively states that in each play, move $p_{n}$ is determined by moves $p_{1},\dots,p_{n-1}$.
A team can be a database. Then $\dep[p_{1},\dots,p_{n-1}]{p_{n}}$ intuitively states that in each line, the value of attribute $p_{n}$ is determined by the values of attributes $p_{1},\dots,p_{n-1}$, i.\,e., that $p_{n}$ is functionally dependent on $p_{1},\dots,p_{n-1}$.
In first-order logic, a team formally is a set of assignments; and $\dep[p_{1},\dots,p_{n-1}]{p_{n}}$ states that in each assignment, the value of $p_{n}$ is determined by the values of $p_{1},\dots,p_{n-1}$.
Most important for this paper, in modal logic, a team is a set of worlds in a Kripke structure; and $\dep[p_{1},\dots,p_{n-1}]{p_{n}}$ states that in each of these worlds, the value of the propositional variable $p_{n}$ is determined by the values of $p_{1},\dots,p_{n-1}$.

Dependence logic is defined by simply adding these dependence atoms to usual first-order logic \cite{va07}. Modal dependence logic (\MDL) is defined by introducing these dependence atoms to modal logic \cite{va08,se09}. The semantics of \MDL is defined with respect to sets $T$ of worlds in a frame (Kripke structure) $W$, for example
$W,T\models \dep[p_{1},\dots,p_{n-1}]{p_{n}}$ if for all worlds $s,t\in T$, if $p_{1},\dots,p_{n-1}$ have the same values in both $s$ and $t$, then $p_{n}$ has the same value in $s$ and $t$, and a formula 
$$\Box\dep[p_{1},\dots,p_{n-1}]{p_{n}}$$
is satisfied in a world $w$ in a Kripke structure $W$, if in the team $T$ consisting of all successor worlds of $w$, $W,T\models \dep[p_{1},\dots,p_{n-1}]{p_{n}}$.

\MDL was introduced in \cite{va08}. V\"a\"an\"anen introduced besides the usual inductive semantics an equivalent game-theoretic semantics. Sevenster \cite{se09} considered the expressibility of \MDL and proved, that on singleton teams $T$, there is a translation from \MDL to usual modal logic, while on arbitrary sets of teams there is no such translation. Sevenster also initiated a complexity-theoretic study of modal dependence logic by proving that the satisfiability problem for \MDL is complete for the class \NEXPTIME of all problems decidable nondeterministically in exponential time.

In this paper, we continue the work of Sevenster by presenting a more thorough study on  complexity questions related to modal dependence logic. A line of research going back to Lewis \cite{le79} and recently taken up in a number of papers \cite{rewa00,he01,hescsc10,memuthvo08} has considered fragments of different propositional logics by restricting the propositional and temporal operators allowed in the language.
The rationale behind this approach is that by systematically restricting the language, one might find a fragment with efficient algorithms but still high enough expressibility in order to be interesting for applications. This in turn might lead to better tools for model checking, verification, etc. On the other hand, it is worthwhile to identify the sources of hardness: What exactly makes satisfiability, model checking, or other problems so hard for certain languages?

We follow the same approach here. We consider all subsets of modal operators $\Box,\Diamond$ and propositional operators $\wedge$, $\vee$, $\aneg$ (atomic negation), $\true, \false$ (the Boolean constants true and false), i.\,e., we study exactly those operators considered by V\"a\"an\"anen \cite{va08}, and examine the satisfiability problem for \MDL restricted to the fragment given by these operators.
Additionally we consider a restricted version of the $\dep{}$ operator in which the arity of the operator is no longer arbitrarily large but bounded by a constant that is fixed for the considered logic.
In each case we exactly determine the computational complexity in terms of completeness for a complexity class such as \NEXPTIME, \PSPACE, \coNP, etc., or by showing that the satisfiability problem admits an efficient (polynomial-time) solution. We also extend the logical language of \cite{va08} by adding classical disjunction (denoted here by $\nor$) besides the dependence disjunction. Connective $\nor$ was already considered by Sevenster (he denoted it by $\bullet$),
but not from a complexity point of view. In this way, we obtain a complexity analysis of the satisfiability problem for \MDL for all subsets of operators studied by V\"a\"an\"anen and Sevenster as well as the arity bounded dependence operator.

Our results are summarized in Table~\ref{results} for dependence atoms of unbounded arity and in Table~\ref{results-bounded} for dependence atoms whose arity is bounded by a fixed $k\geq 3$. Here $+$ denotes presence and $-$ denotes absence of an operator, and $*$ states that the complexity does not depend on the operator. One of our main and technically most involved contributions addresses a fragment that has been called \emph{Poor Man's Logic} in the literature on modal logic \cite{he01}, i.\,e., the language without disjunction $\vee$. We show that for unbounded arity dependence logic we still have full complexity (Theorem~\ref{poor man dep complexity}, first line of Table~\ref{results}),  i.\,e., we show that Poor Man's Dependence Logic is \NEXPTIME-complete. If we also forbid negation, then the complexity drops down to $\SigmaTwo(=\NP^\NP)$; i.\,e., Monotone Poor Man's Dependence Logic is $\SigmaTwo$-complete (Theorem~\ref{poor man bullet complexity}, but note that we need $\nor$ here). And if we instead restrict the logic to only contain dependence atoms of arity less or equal $k$ for a fixed $k\geq 3$ the complexity drops to $\SigmaThree(=\NP^{\SigmaTwo})$; i.\,e., bounded arity Poor Man's Dependence Logic is $\SigmaThree$-complete (Corollary~\ref{bounded dep concrete}b).

\begin{table}[ht]
\begin{center}
\[
\begin{array}{c|c||c|c|c|c|c||c|c||c|p{8.5em}}
\Box&\Diamond&\wedge&\vee&\aneg&\true&\false&\dep{}&\nor&\textbf{Complexity}&\textbf{Reference}\\
\hline
+&+ & +&*&+&*&* & +&* & \NEXPTIME&Theorem~\ref{poor man dep complexity}\\
+&+ & +&+&+&*&* & -&* & \PSPACE&Corollary~\ref{simple cases}a\\
+&+ & +&+&-&*&+ & *&* & \PSPACE&Corollary~\ref{simple cases}b\\
+&+ & +&-&+&*&* & -&+ & \SigmaTwo&Theorem~\ref{poor man bullet complexity}\\
+&+ & +&-&-&*&+ & *&+ & \SigmaTwo&Theorem~\ref{poor man bullet complexity}\\
+&+ & +&-&+&*&* & -&- & \coNP&\cite{la77}, \cite{dolenahonuma92}\\
+&+ & +&-&-&*&+ & *&- & \coNP&Corollary~\ref{simple cases}c\\
\hline
+&- & +&+&+&*&* & *&* & \NP&Corollary~\ref{one modality cases}a\\
-&+ & +&+&+&*&* & *&* & \NP&Corollary~\ref{one modality cases}a\\
+&- & +&-&+&*&* & *&+ & \NP&Corollary~\ref{one modality cases}a\\
-&+ & +&-&+&*&* & *&+ & \NP&Corollary~\ref{one modality cases}a\\
+&- & +&-&+&*&* & *&- & \PTIME&Corollary~\ref{one modality cases}b\\
-&+ & +&-&+&*&* & *&- & \PTIME&Corollary~\ref{one modality cases}b\\
+&- & +&*&-&*&* & *&* & \PTIME&Corollary~\ref{one modality cases}c\\
-&+ & +&*&-&*&* & *&* & \PTIME&Corollary~\ref{one modality cases}c\\
*&* & -&*&*&*&* & *&* & \PTIME&Corollary~\ref{one modality cases}d\\
*&* & *&*&-&*&- & *&* & \trivial&Corollary~\ref{simple cases}d\\
\hline
-&- & +&+&+&*&* & *&* & \NP&\cite{co71}\\
-&- & +&*&+&*&* & *&+ & \NP&\cite{co71}, $\nor \equiv \vee$\\
-&- & *&-&*&*&* & *&- & \PTIME&Corollary~\ref{simple cases}e\\
-&- & *&*&-&*&* & *&* & \PTIME&Corollary~\ref{simple cases}f
\end{array}\vspace*{-1ex}\]
{\scriptsize $+:$ operator present \quad $-:$ operator absent \quad $*:$ complexity independent of operator}
\end{center}
\vspace*{-1ex}
\caption{Complete classification of complexity for fragments of \MDLSAT \newline{\small All results are completeness results except for the \PTIME cases which are upper bounds.}}
\label{results}
\end{table}

\begin{table}[ht]
\begin{center}
\[
\begin{array}{c|c||c|c|c|c|c||c|c||c|p{8.5em}}
\Box&\Diamond&\wedge&\vee&\aneg&\true&\false&\dep{}&\nor&\textbf{Complexity}&\textbf{Reference}\\
\hline
+&+ & +&+&+&*&* & *&* & \PSPACE&Corollary~\ref{bounded dep concrete}a\\
+&+ & +&+&-&*&+ & *&* & \PSPACE&Corollary~\ref{simple cases}b\\
+&+ & +&-&+&*&* & +&* & \SigmaThree&Corollary~\ref{bounded dep concrete}b\\
+&+ & +&-&+&*&* & -&+ & \SigmaTwo&Theorem~\ref{poor man bullet complexity}\\
+&+ & +&-&-&*&+ & *&+ & \SigmaTwo&Theorem~\ref{poor man bullet complexity}\\
+&+ & +&-&+&*&* & -&- & \coNP&\cite{la77}, \cite{dolenahonuma92}\\
+&+ & +&-&-&*&+ & *&- & \coNP&Corollary~\ref{simple cases}c\\
\hline
+&- & +&+&+&*&* & *&* & \NP&Corollary~\ref{one modality cases}a\\
-&+ & +&+&+&*&* & *&* & \NP&Corollary~\ref{one modality cases}a\\
+&- & +&-&+&*&* & *&+ & \NP&Corollary~\ref{one modality cases}a\\
-&+ & +&-&+&*&* & *&+ & \NP&Corollary~\ref{one modality cases}a\\
+&- & +&-&+&*&* & *&- & \PTIME&Corollary~\ref{one modality cases}b\\
-&+ & +&-&+&*&* & *&- & \PTIME&Corollary~\ref{one modality cases}b\\
+&- & +&*&-&*&* & *&* & \PTIME&Corollary~\ref{one modality cases}c\\
-&+ & +&*&-&*&* & *&* & \PTIME&Corollary~\ref{one modality cases}c\\
*&* & -&*&*&*&* & *&* & \PTIME&Corollary~\ref{one modality cases}d\\
*&* & *&*&-&*&- & *&* & \trivial&Corollary~\ref{simple cases}d\\
\hline
-&- & +&+&+&*&* & *&* & \NP&\cite{co71}\\
-&- & +&*&+&*&* & *&+ & \NP&\cite{co71}, $\nor \equiv \vee$\\
-&- & *&-&*&*&* & *&- & \PTIME&Corollary~\ref{simple cases}e\\
-&- & *&*&-&*&* & *&* & \PTIME&Corollary~\ref{simple cases}f
\end{array}\vspace*{-1ex}\]
{\scriptsize $+:$ operator present \quad $-:$ operator absent \quad $*:$ complexity independent of operator}
\end{center}
\vspace*{-1ex}
\caption{Complete classification of complexity for fragments of \MDLSATk for $k\geq 3$ \newline{\small All results are completeness results except for the \PTIME cases  which are upper bounds.}}
\label{results-bounded}
\end{table}

\section{Modal dependence logic}

We will only briefly introduce the syntax and semantics of modal dependence logic here. For a more profound overview consult V\"a\"an\"anen's introduction \cite{va08} or Sevenster's analysis \cite{se09} which includes a self-contained introduction to \MDL.

\subsection{Syntax}
The formulas of \emph{modal dependence logic} (\MDL) are built from a set $AP$ of \emph{atomic propositions} and the \emph{\MDL operators} $\Box$, $\Diamond$, $\wedge$, $\vee$, $\aneg$ (also denoted $\neg$), $\true$, $\false$, $\dep{}$ and~$\nor$.

The set of \emph{\MDL formulas} is defined by the following grammar
\[\begin{array}{l@{\quad}l}
\varphi\; ::=& \true \;\mid\; \false \;\mid\; p \;\mid\; \neg p \;\mid\; \dep[p_1,\dots,p_{n-1}]{p_n} \;\mid\; \neg \dep[p_1,\dots,p_{n-1}]{p_n} \;\mid\\
& \varphi\wedge \varphi \;\mid\; \varphi\vee\varphi \;\mid\; \varphi\nor\varphi \;\mid\; \Box\varphi \;\mid\; \Diamond\varphi,
\end{array}\]
where $n\geq 1$.% Note that we only allow atomic negation.

All formulas in the first row will sometimes be denoted as \emph{atomic} formulas and formulas of the form $\dep[p_1,\dots,p_{n-1}]{p_n}$ as \emph{dependence atoms}. The \emph{arity} of a dependence atom $\dep[p_1,\dots,p_{n-1}]{p_n}$ is defined as $n-1$ and with \MDLk we denote the set of all \MDL formulas which do not contain dependence atoms of arity greater than $k$.
We sometimes write $\nabla^k$ for $\underbrace{\nabla\dots\nabla}_{k \text{ times}}$ (with $\nabla\in\{\Box,\Diamond\}$, $k\in\N$).

\subsection{Semantics}
A \emph{frame} (or \emph{Kripke structure}) is a tuple $W=(S,R,\pi)$ where $S$ is a non-empty set of \emph{worlds}, $R\subseteq S\times S$ is the \emph{accessibility relation} and $\pi:S \to \powerset{AP}$ is the \emph{labeling function}.

In contrast to usual modal logic, truth of a \MDL formula is not defined with respect to a single world of a frame but with respect to a set of worlds, as already pointed out in the introduction.
The \emph{truth} of a \MDL formula $\varphi$ in an \emph{evaluation set} $T$ of worlds of a frame $W=(S,R,\pi)$ is denoted by $W,T\models \varphi$ and is defined as follows:
\[\begin{array}{l@{\,}l@{\ }c@{\ }lcp{17em}}
i)&W,T&\models&\top&\quad\quad&always holds\\
ii)&W,T&\models&\bot&\quad\text{iff}\quad&$T=\emptyset$\\
iii)&W,T&\models&p&\quad\text{iff}\quad&$p\in\pi(s)$ for all $s\in T$\\
iv)&W,T&\models&\neg p&\quad\text{iff}\quad&$p\notin\pi(s)$ for all $s\in T$\\
v)&W,T&\models&\dep[p_1,\dots,p_{n-1}]{p_n}&\quad\text{iff}\quad&for all $s_1,s_2\in T$ with\\
&&&\multicolumn{3}{r}{\pi(s_1)\cap\{p_1,\dots,p_{n-1}\}=\pi(s_2)\cap\{p_1,\dots,p_{n-1}\}:}\\
&&&&&$p_n\in \pi(s_1)$\quad{iff}\quad$p_n\in \pi(s_2)$\\
vi)&W,T&\models&\neg\dep[p_1,\dots,p_{n-1}]{p_n}&\quad\text{iff}\quad&$T=\emptyset$\\
vii)&W,T&\models&\varphi\wedge\psi&\quad\text{iff}\quad&$W,T\models \varphi$ and $W,T\models \psi$\\
viii)&W,T&\models&\varphi\vee\psi&\quad\text{iff}\quad&there are sets $T_1,T_2$ with $T=T_1\cup T_2$,\\
&&&&&$W,T_1\models\varphi$ and $W,T_2\models \psi$\\
ix)&W,T&\models&\varphi\nor\psi&\quad\text{iff}\quad&$W,T\models \varphi$ or $W,T\models \psi$\\
x)&W,T&\models&\Box \varphi&\quad\text{iff}\quad&$W,\{s'\mid \exists s\in T$ with $(s,s')\in R\}\models \varphi$\\
xi)&W,T&\models&\Diamond \varphi&\quad\text{iff}\quad&there is a set $T'\subseteq S$ such that $W,T'\models \varphi$ and for all $s\in T$ there is a $s'\in T'$ with $(s,s')\in R$
\end{array}\]

Note the seemingly rather strange definition of $vi)$. The rationale for this, given by V\"a\"an\"anen \cite[p.~24]{va07}, is the fact that if we negate $v)$ and maintain the same duality as between $iii)$ and $iv)$ we get the condition
\[\begin{array}{l@{\ }l}
\text{$\forall s_1,s_2\in T$:}&\pi(s_1)\cap\{p_1,\dots,p_{n-1}\}=\pi(s_2)\cap\{p_1,\dots,p_{n-1}\}\\
&\text{and\quad}p_n\in \pi(s_1)\text{ iff } p_n\notin \pi(s_2),
\end{array}\]
and this is only true if $T=\emptyset$.

By $\vee$ we denote dependence disjunction instead of classical disjunction because the semantics of dependence disjunction is an extension of the semantics of usual modal logic disjunction and thus we preserve downward compatibility of our notation in this way. However, we still call the $\nor$ operator ``classical'' because in a higher level context -- where our sets of states are viewed as single objects themselves -- it is indeed the usual disjunction, cf.~\cite{abva08}.
% Note that rationales for the seemingly rather strange definitions of the truth of $\varphi \vee \psi$ as well as $\neg\dep[p_1,\dots,p_{n-1}]{p_n}$ were given by V\"a\"an\"anen \cite{va08,va07}.

For each $M\subseteq\{\Box,\Diamond,\wedge,\vee,\aneg,\true,\false,\dep{},\nor\}$ define the set of $\MDL[M ]$ ($\MDLk[M ]$) formulas to be the set of \MDL (resp.~\MDLk) formulas which are built from atomic propositions using only operators and constants from $M$.

We are interested in the parameterized decision problems \textbf{MDL-SAT$\mathbf{(M)}$} and \textbf{MDL$_k$-SAT$\mathbf{(M)}$}:
\begin{description}
  \item[Given]A $\MDL[M ]$ (resp.~ $\MDLk[M ]$) formula $\varphi$.
  \item[Question]Is there a frame $W$ and a non-empty set $T$ of worlds in $W$ such that $W,T\models \varphi$?
\end{description}

Note that, as V\"a\"an\"anen already pointed out \cite[Lemma~4.2.1]{va08}, the semantics of \MDL satisfies the \emph{downward closure property}, i.e., if $W,T\models \varphi$, then $W,T'\models \varphi$ for all $T'\subseteq T$. Hence, to check satisfiability of a formula $\varphi$ it is enough to check whether there is a frame $W$ and a single world $w$ in $W$ such that $W,\{w\}\models \varphi$.

As argued in \cite[Proposition~3.10]{va07}, the downward closure property suits the intuition that a true formula expressing dependence should not becoming false when making the team smaller, since if dependence is true in a large set than it is even more so in a smaller set.

\section{Complexity results}

To state the first lemma we need the following complexity operator.
If $\calC$ is an arbitrary complexity class then $\existOperator\calC$ denotes the class of all sets $A$ for which there is a set $B\in\calC$ and a polynomial $p$ such that for all $x$,
\[x\in A \text{ iff there is a }y\text{ with }|y|\leq p(|x|)\text{ and }\langle x,y\rangle\in B.\]
Note that for every class $\calC$, $\existOperator\calC\subseteq \NP^\calC$. However, the converse does not hold in general.
We will only need the following facts: $\existOperator \coNP=\SigmaTwo$, $\existOperator \PiTwo = \SigmaThree$, $\existOperator\PSPACE=\PSPACE$ and $\existOperator \NEXPTIME = \NEXPTIME$.

Our first lemma concerns sets of operators including classical disjunction.

\begin{lemma}\label{bullet distributivity}
Let $M$ be a set of \MDL operators. Then it holds:
\begin{enumerate}[a)]
 \item Every $\MDL[M\cup\{\nor\}]$ ($\MDLk[M\cup\{\nor\}]$) formula $\varphi$ is equivalent to a formula $\bignor[2^{|\varphi|}]{i=1}\,\psi_i$ with $\psi_i \in \MDL[M ]$ (resp.~$\MDLk[M ]$) for all $i\in\{1,\dots,2^{|\varphi|}\}$.
 \item If $\calC$ is an arbitrary complexity class with $\PTIME \subseteq \calC$ and $\MDLSAT[M ]\in\calC$ ($\MDLSATk[M ]\in\calC$) then $\MDLSAT[M\cup\{\nor\}] \in \existOperator\calC$ (resp.~$\MDLSATk[M\cup\{\nor\}] \in \existOperator\calC$).
\end{enumerate}
\end{lemma}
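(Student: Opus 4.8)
The plan is to establish (a) by pushing every occurrence of $\nor$ to the top of the formula, and then to read off (b) almost for free. The engine of (a) is the fact that $\nor$ is ordinary classical disjunction at the level of the satisfaction relation (clause (ix): $W,T\models\alpha\nor\beta$ iff $W,T\models\alpha$ or $W,T\models\beta$) and that every connective available in $M$ commutes with it. Concretely, I would verify the equivalences $(\alpha\nor\beta)\wedge\gamma\equiv(\alpha\wedge\gamma)\nor(\beta\wedge\gamma)$, $(\alpha\nor\beta)\vee\gamma\equiv(\alpha\vee\gamma)\nor(\beta\vee\gamma)$, $\Box(\alpha\nor\beta)\equiv\Box\alpha\nor\Box\beta$ and $\Diamond(\alpha\nor\beta)\equiv\Diamond\alpha\nor\Diamond\beta$, together with associativity, commutativity and idempotence of $\nor$. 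Note that atomic negation never produces a subformula of the form $\neg(\alpha\nor\beta)$, so no distributivity law for $\aneg$ is needed; this is exactly why the rewriting terminates with $\nor$-free formulas, i.e.\ members of $\MDL[M]$ (resp.\ $\MDLk[M]$), below every $\nor$.

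The $\wedge$ and $\Box$ equivalences are immediate from the corresponding semantic clauses, since conjunction and the deterministic box-image commute with a logical ``or''. The cases that need a little care are $\vee$ and $\Diamond$, because their semantics (clauses (viii) and (xi)) already contain an existential quantifier over evaluation sets. For $\vee$, unfolding $W,T\models(\alpha\nor\beta)\vee\gamma$ gives an assertion of the shape ``there are $T_1,T_2$ with $T=T_1\cup T_2$, $W,T_2\models\gamma$, and ($W,T_1\models\alpha$ or $W,T_1\models\beta$)''; since the existential quantifier over the splitting distributes over the inner ``or'', this is equivalent to $W,T\models(\alpha\vee\gamma)\nor(\beta\vee\gamma)$. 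The argument for $\Diamond$ is identical, with the existential over the witnessing successor set $T'$ playing the role of the existential over the splitting.

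With these equivalences in hand, a structural induction turns $\varphi$ into an equivalent disjunction $\bignor[d]{i=1}\psi_i$ of $\nor$-free formulas. To obtain the stated bound I would track the number $d(\varphi)$ of disjuncts produced: $d(\text{atomic})=1$, $d(\alpha\nor\beta)=d(\alpha)+d(\beta)$, $d(\alpha\wedge\beta)=d(\alpha\vee\beta)=d(\alpha)\cdot d(\beta)$, and $d(\Box\alpha)=d(\Diamond\alpha)=d(\alpha)$. A routine induction then gives $d(\varphi)\le 2^{|\varphi|}$, using $2^a+2^b\le 2^{a+b+1}$ and $2^a\cdot 2^b=2^{a+b}$. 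Finally, since $\nor$ is idempotent, I can repeat one disjunct to pad the list to exactly $2^{|\varphi|}$ entries, yielding the form claimed in (a).

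For (b), I would combine (a) with the fact that $\nor$ behaves classically for satisfiability: $\bignor[2^{|\varphi|}]{i=1}\psi_i$ is satisfiable by some non-empty team iff some single $\psi_i\in\MDL[M]$ is. Crucially, each disjunct is obtained from $\varphi$ by resolving every occurrence of $\nor$ to one of its two arguments; writing $\varphi^\sigma$ for the result of such a choice function $\sigma$, we have $|\varphi^\sigma|\le|\varphi|$, the function $\sigma$ is describable by at most $|\varphi|$ bits, and $\varphi^\sigma$ is computable from $\langle\varphi,\sigma\rangle$ in polynomial time. Taking $\sigma$ as the existential certificate $y$ and $B=\{\langle\varphi,\sigma\rangle : \varphi^\sigma\in\MDLSAT[M]\}$ shows $\MDLSAT[M\cup\{\nor\}]\in\existOperator\calC$, provided $B\in\calC$. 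This last point is the only place the hypotheses on $\calC$ are used: $\langle\varphi,\sigma\rangle\mapsto\varphi^\sigma$ is a polynomial-time many-one reduction of $B$ to $\MDLSAT[M]\in\calC$, so $B\in\calC$ as soon as $\calC$ is closed under such reductions (which, given $\PTIME\subseteq\calC$, holds for all classes to which we apply the lemma). The bounded-arity statements for $\MDLSATk[M]$ follow verbatim, since resolving a $\nor$ never changes the arity of any dependence atom. I expect the verification of the $\vee$- and $\Diamond$-distributivities and the bookkeeping for the exact count $2^{|\varphi|}$ to be the only genuinely fiddly parts; everything else is routine.
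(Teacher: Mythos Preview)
Your proposal is correct and follows essentially the same approach as the paper: both push $\nor$ to the top via the distributivity laws $\varphi\star(\psi\nor\sigma)\equiv(\varphi\star\psi)\nor(\varphi\star\sigma)$ for $\star\in\{\wedge,\vee\}$ and $\nabla(\varphi\nor\psi)\equiv\nabla\varphi\nor\nabla\psi$ for $\nabla\in\{\Box,\Diamond\}$, and then for (b) observe that a single disjunct can be extracted in polynomial time from $\varphi$ together with a bit-string selecting one side of each $\nor$. Your write-up is in fact more careful than the paper's in two respects---you spell out why the $\vee$- and $\Diamond$-distributivities hold (the paper merely asserts them, noting in a footnote that the converse distributivity $\varphi\nor(\psi\vee\sigma)\equiv(\varphi\nor\psi)\vee(\varphi\nor\sigma)$ fails), and you flag that membership of $B$ in $\calC$ requires closure of $\calC$ under polynomial-time many-one reductions, which the paper tacitly assumes.
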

\begin{proof}
a) follows from the distributivity of $\nor$ with all other operators. More specifically $\varphi \star (\psi\nor\sigma)\equiv(\varphi \star \psi)\nor(\varphi \star\sigma)$ for $\star \in\{\wedge,\vee\}$ and $\nabla (\varphi \nor \psi)\equiv (\nabla\varphi)\nor(\nabla\varphi)$ for $\nabla \in\{\Diamond,\Box\}$.\footnote{Interestingly, but not of relevance for our work, $\varphi\nor(\psi\vee\sigma)\not\equiv(\varphi\nor\psi)\vee(\varphi\nor\sigma)$.}
b) follows from a) with the observation that $\bignor[2^{|\varphi|}]{i=1}\,\psi_i$ is satisfiable if and only if there is an $i\in\{1,\dots,2^{|\varphi|}\}$ such that $\psi_i$ is satisfiable. Note that given $i\in\{1,\dots,2^{|\varphi|}\}$ the formula $\psi_i$ can be computed from the original formula $\varphi$ in polynomial time by choosing (for all $j\in\{1,\dots,|\varphi|\}$) from the $j$th subformula of the form $\psi \nor \sigma$ the formula $\psi$ if the $j$th bit of $i$ is 0 and $\sigma$ if it is 1.
\end{proof}

We need the following simple property of monotone \MDL formulas.

\begin{lemma}\label{dep without negation}
Let $M$ be a set of \MDL operators with $\aneg\notin M$. Then an arbitrary $\MDL[M ]$ formula $\varphi$ is satisfiable iff the formula generated from $\varphi$ by replacing every dependence atom and every atomic proposition with the same atomic proposition $t$ is satisfiable.
\end{lemma}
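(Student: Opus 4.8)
The plan is to prove both implications by structural induction on $\varphi$, exploiting the fact that, once $\aneg\notin M$, the only atomic formulas that can occur are (unnegated) atomic propositions, dependence atoms and the constants $\true,\false$, and that both an atomic proposition and a dependence atom can be made ``uniformly true'' by choosing the labeling suitably. Write $\varphi'$ for the formula obtained from $\varphi$ by replacing every dependence atom and every atomic proposition with the fixed proposition $t$ (leaving $\true,\false$ unchanged); note that $\varphi$ and $\varphi'$ share exactly the same connective/modal structure, so an induction over one is an induction over the other.

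For the direction ``$\varphi$ satisfiable $\Rightarrow$ $\varphi'$ satisfiable'', I would start from a frame $W$ and a nonempty team $T$ with $W,T\models\varphi$, and let $W_t$ be $W$ with $t$ made true in every world (same worlds, same accessibility relation). I would then show by induction on subformulas $\psi$ of $\varphi$ that $W,T_0\models\psi$ implies $W_t,T_0\models\psi'$ for every team $T_0$. The base cases are immediate: whenever $\psi$ is an atomic proposition or a dependence atom we have $\psi'=t$, and $W_t,T_0\models t$ holds for all $T_0$ since $t$ is everywhere true; the cases $\true,\false$ are trivial since $\false$ is untouched and its meaning depends only on whether $T_0=\emptyset$. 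The steps for $\wedge,\vee,\nor,\Box,\Diamond$ go through because each connective's semantics refers only to the truth of its immediate subformulas on various teams together with the worlds and the accessibility relation, and the latter are identical in $W$ and $W_t$; in particular the existential witnesses $T_1,T_2$ (for $\vee$, $\nor$) and $T'$ (for $\Diamond$) supplied by the hypothesis transfer verbatim, and the $\Box$-successor set is unchanged.

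For the converse I would argue symmetrically but with the opposite labeling: from $W',T'\models\varphi'$ with $T'$ nonempty, let $W_{\mathrm{all}}$ be $W'$ with every atomic proposition made true in every world, and prove by the same induction that $W',T_0\models\psi'$ implies $W_{\mathrm{all}},T_0\models\psi$. The step I expect to be the crux is the base case for a dependence atom $\dep[p_1,\dots,p_{n-1}]{p_n}$: I must check that under the all-true labeling any two worlds agree on the value of $p_n$ (it is true everywhere), so the dependence condition holds vacuously for every team, and likewise $W_{\mathrm{all}},T_0\models p$ holds for every atomic proposition $p$; hence the conclusion of the base case is again unconditionally true. The inductive steps are identical to the first direction, relying once more on the preservation of worlds and accessibility relation.

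Since in both constructions the evaluation team is carried over unchanged, nonemptiness is preserved, so each construction genuinely witnesses satisfiability in the intended sense. I expect no real difficulty beyond the dependence-atom base case above; the hypothesis $\aneg\notin M$ enters only to exclude $\neg p$ and $\neg\dep{}$, which is precisely what makes the atomic and dependence ``leaves'' monotone enough to become uniformly true under a saturated labeling, so that $\varphi$ and $\varphi'$ become interchangeable.
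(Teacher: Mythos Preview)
Your proposal is correct and follows essentially the same idea as the paper: make all atomic propositions true everywhere, so that every (unnegated) atomic proposition and every dependence atom is vacuously satisfied, and then observe that the modal/connective structure is preserved. The paper compresses this into a single sentence (``if a frame $W$ is a model for $\varphi$, so is the frame obtained from $W$ by setting all atomic propositions in all worlds to true''), whereas you spell out the same reasoning via a structural induction and split it into two directions with two frame constructions; a small simplification is that the single all-true frame already handles both directions at once, since in it $\varphi$ and $\varphi'$ have identical truth values on every team.
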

\begin{proof}
If a frame $W$ is a model for $\varphi$, so is the frame generated from $W$ by setting all atomic propositions in all worlds to true.
\end{proof}

We are now able to classify some cases that can be easily reduced to known results.

\begin{corollary}\label{simple cases}
\begin{enumerate}[a)]
 \item If $\{\Box,\Diamond,\wedge,\vee,\aneg\}\subseteq M\subseteq\{\Box,\Diamond,\wedge,\vee,\aneg,\true,\false,\nor\}$ then \MDLSAT[M ] is \PSPACE-complete.
 \item If $\{\Box,\Diamond,\wedge,\vee,\false\}\subseteq M\subseteq\{\Box,\Diamond,\wedge,\vee,\true,\false,\dep{},\nor\}$ then \MDLSAT[M ] and \MDLSATk[M ] are \PSPACE-complete for all $k\geq 0$.
 \item If $\{\Box,\Diamond,\wedge,\false\}\subseteq M\subseteq\{\Box,\Diamond,\wedge,\true,\false,\dep{}\}$ then \MDLSAT[M ] and \MDLSATk[M ] are \coNP-complete  for all $k\geq 0$.
 \item If $M \subseteq \{\Box,\Diamond,\wedge,\vee,\true,\dep{},\nor\}$ then every \MDL[M ] formula is satisfiable.
 \item If $M \subseteq \{\wedge,\aneg, \true,\false,\dep{}\}$ then \MDLSAT[M ] is in \PTIME.
 \item If $M \subseteq \{\wedge,\vee, \true,\false,\dep{},\nor\}$ then \MDLSAT[M ] is in \PTIME.
\end{enumerate}
\end{corollary}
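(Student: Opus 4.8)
My plan is to settle the three elementary items d), e), f) directly, and the three completeness items a), b), c) by reducing — through Lemmas~\ref{bullet distributivity} and~\ref{dep without negation} and the \emph{flatness} of dependence-free formulas — to known bounds for ordinary modal logic. For d), where $M$ contains neither $\aneg$ nor $\false$, the single frame with one reflexive world $w$ and $\pi(w)=AP$ is a universal model: an easy induction gives $W,\{w\}\models\varphi$ for every $\MDL[M]$ formula (atoms and positive dependence atoms hold on the all-true singleton, the $\wedge$/$\vee$/$\nor$ cases take $T_1=T_2=\{w\}$, and the modalities use that $w$ is its own successor). For e), the absence of $\vee,\nor$ and of both modalities makes $\varphi$ an $\wedge$-conjunction of atomic formulas, which by downward closure need only be tested on a singleton $\{w\}$: there $\true$ and positive dependence atoms hold vacuously, $\false$ and negated dependence atoms fail, and literals merely constrain $\pi(w)$, so satisfiability reduces to a linear-time consistency check. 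For f), I would first use Lemma~\ref{dep without negation} ($\aneg\notin M$) to replace all atoms and dependence atoms by one proposition $t$; since on a singleton the empty team satisfies everything, $\vee$ and $\nor$ both act as ordinary disjunction, so $\varphi$ is satisfiable iff one of the two one-world valuations of $t$ makes the resulting monotone Boolean formula true — a polynomial check (note this must be done directly, as $\existOperator\PTIME=\NP$).

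For the completeness items the engine is flatness: a dependence-free $\chi$ obeys $W,T\models\chi$ iff $W,\{s\}\models\chi$ for all $s\in T$, and on singletons the team semantics agrees with the ordinary Kripke semantics (an induction using downward closure in the $\Diamond$ and $\vee$ clauses; note $\nor$ is \emph{not} flat and so must be handled separately). In a) we have $\dep{}\notin M$ and $\aneg\in M$, so every $\nor$-free formula is flat and its satisfiability is plain modal satisfiability over $M\setminus\{\nor\}$: the upper bound is Ladner's \PSPACE\ algorithm, lifted to $\nor$ by Lemma~\ref{bullet distributivity}b with $\existOperator\PSPACE=\PSPACE$, and the lower bound is his \PSPACE-hardness of $\{\Box,\Diamond,\wedge,\vee,\aneg\}$ \cite{la77}, which sits inside $M$. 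In b) and c) we have $\aneg\notin M$, so I apply Lemma~\ref{dep without negation} to collapse everything atomic to a single $t$; the $\nor$-free remainder is dependence-free, hence flat, and becomes a \emph{monotone} modal formula with Boolean constants. For b) this lies in full modal logic and so in \PSPACE\ (adding $\nor$ for free via $\existOperator\PSPACE=\PSPACE$); for c), with $\vee$ also gone, it lies in the poor man's fragment and so in \coNP\ \cite{he01}. The same bounds hold for $\MDLSATk[M]$, the arity restriction becoming vacuous once dependence atoms are removed.

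The real obstacle is the matching lower bound for the \emph{monotone} fragments b) and c), which, lacking $\aneg$, cannot quote the classical hardness reductions verbatim. Here I would simulate atomic negation by the constant $\false$ and the modal structure, using that $\Box\false$ holds exactly at the dead-end worlds: a negated literal is encoded by forcing a fresh witness branch to terminate and a positive literal by forcing it to survive. With this device I would replay the standard constructions — a branching reduction ($\Diamond$ for existential, $\Box$ with $\wedge$ for universal choices) for \PSPACE-hardness of b) after \cite{la77}, and the poor man's reduction for \coNP-hardness of c) after \cite{dolenahonuma92}. The delicate point, which I would check most carefully, is that these negation-free, $\false$-based encodings remain satisfiability-preserving — in particular that downward closure cannot let an unintended smaller (or empty) team vacuously satisfy a constraint that is meant to fail.
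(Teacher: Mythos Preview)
Your proposal is correct and, for items a), d), e), f) and the upper bounds of b), c), follows the same route as the paper (downward closure to singletons, Lemma~\ref{dep without negation} to strip dependence and atoms in the monotone cases, Lemma~\ref{bullet distributivity} plus $\existOperator\PSPACE=\PSPACE$ for $\nor$, and Ladner's algorithm as the engine). The only substantive divergence is in the \emph{lower bounds} for b) and c): you propose to re-engineer the classical reductions by simulating atomic negation via $\Box\bot$ at dead ends, whereas the paper simply cites off-the-shelf results that already live in the required negation-free fragments. For b) it invokes Hemaspaandra~\cite[Theorem~6.5]{he01}, which establishes \PSPACE-hardness directly for $\{\Box,\Diamond,\wedge,\vee,\bot\}$; for c) it observes that the Donini et al.\ \coNP-hardness proof for $\mathcal{ALE}$~\cite{dolenahonuma92} in fact never uses $\aneg$ or $\top$, so it is already a proof for $\{\Box,\Diamond,\wedge,\bot\}$. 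Your $\Box\bot$-encoding is exactly the device underlying these cited results, so your plan would work --- it just reproves what can be quoted. The paper's citation route is shorter; your route is more self-contained and makes the mechanism explicit.
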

\begin{proof}
The lower bound of a) was shown by Ladner \cite{la77}, who proves \PSPACE-completeness for the case of full ordinary modal logic. The upper bound follows from this, Lemma~\ref{bullet distributivity} and $\existOperator\PSPACE=\PSPACE$.
The lower bound for b) was shown by Hemaspaandra \cite[Theorem~6.5]{he01} and the upper bound follows from a) together with Lemma~\ref{dep without negation}.

The lower bound for c) was shown by Donini et al.~\cite{dolenahonuma92} who prove \NP-hardness of the problem to decide whether an $\mathcal{ALE}$-concept is unsatisfiable. $\mathcal{ALE}$ is a description logic which essentially is nothing else then $\MDL[\Box,\Diamond,\wedge,\aneg,\true,\false]$ ($\aneg$ and $\true$ are not used in the hardness proof). For the upper bound Ladner's \PSPACE-algorithm \cite{la77} can be used, as in the case without disjunction it is in fact a \coNP-algorithm, together with Lemma~\ref{dep without negation}.

d) follows from Lemma~\ref{dep without negation} together with the fact that every \MDL formula with $t$ as the only atomic subformula is satisfied in the transitive singleton, i.e.~the frame consisting of only one state which has itself as successor, in which $t$ is true.

e) follows from the polynomial time complexity of deciding satisfiability of a \onecnf formula.
f) reduces to Boolean formula evaluation by Lemma~\ref{dep without negation}.
Note that for e) and f) dependence atoms can be replaced by \true{} because there we do not have any modality.
\end{proof}

\subsection{Poor man's dependence logic}

We now turn to the $\SigmaTwo$-complete cases. These include monotone poor man's logic, with and without dependence atoms.

\begin{theorem}\label{poor man bullet complexity}
If $\{\Box,\Diamond,\wedge,\aneg,\nor\} \subseteq M \subseteq\{\Box,\Diamond,\wedge,\aneg,\true,\false,\allowbreak\nor\}$ or $\{\Box,\Diamond,\wedge,\false,\nor\} \subseteq M\subseteq\{\Box,\Diamond,\wedge,\allowbreak\true,\allowbreak\false,\allowbreak\dep{},\allowbreak\nor\}$ then \MDLSAT[M ] and \MDLSATk[M ] are \SigmaTwo-complete for all $k\geq 0$.
\end{theorem}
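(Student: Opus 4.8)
The plan has two halves that mirror the identity $\SigmaTwo=\existOperator\coNP$.

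For membership in \SigmaTwo{} I would strip the classical disjunction and then invoke Lemma~\ref{bullet distributivity}b. Put $M'=M\setminus\{\nor\}$. In the first range we have $\{\Box,\Diamond,\wedge,\aneg\}\subseteq M'\subseteq\{\Box,\Diamond,\wedge,\aneg,\true,\false\}$, and $\MDLSAT[M']\in\coNP$ because, in the absence of $\vee$, Ladner's tableau procedure \cite{la77} is a \coNP-algorithm (this is the corresponding \coNP{} entry of Table~\ref{results}). In the second range we have $\{\Box,\Diamond,\wedge,\false\}\subseteq M'\subseteq\{\Box,\Diamond,\wedge,\true,\false,\dep{}\}$, and $\MDLSAT[M']\in\coNP$ by Corollary~\ref{simple cases}c. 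As $\PTIME\subseteq\coNP$, Lemma~\ref{bullet distributivity}b yields $\MDLSAT[M]=\MDLSAT[M'\cup\{\nor\}]\in\existOperator\coNP=\SigmaTwo$; the same two inputs hold verbatim for bounded arity, so $\MDLSATk[M]\in\SigmaTwo$ for every $k\geq 0$ as well.

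For \SigmaTwo-hardness I would reduce from the truth problem for quantified Boolean formulas $\Phi=\exists x_1\cdots x_m\,\forall y_1\cdots y_n\,\theta(\vec x,\vec y)$ with $\theta$ in disjunctive normal form, a restriction under which the problem is still \SigmaTwo-complete, the inner statement $\forall\vec y\,\theta(\alpha,\vec y)$ being validity of a DNF and hence \coNP-complete. The idea is to encode the leading block $\exists\vec x$ by $\nor$ and the trailing block $\forall\vec y$ by a $\nor$-free poor man's gadget, in exact analogy with the membership proof. I would build a single formula $\widehat\Phi$ containing, for each $x_i$, one subformula of the shape $(\,\cdot\,)\nor(\,\cdot\,)$ offering the two polarities of $x_i$. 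By Lemma~\ref{bullet distributivity}a, $\widehat\Phi$ is equivalent to $\bignor[2^{m}]{\alpha}\psi_\alpha$, where $\alpha$ ranges over the assignments to $\vec x$ and each $\nor$-free $\psi_\alpha$ is the poor man's gadget instantiated for $\theta(\alpha,\cdot)$; since this disjunction is satisfiable iff some $\psi_\alpha$ is, the reduction is correct once $\psi_\alpha$ is satisfiable exactly when $\forall\vec y\,\theta(\alpha,\vec y)$ holds. The gadget $\psi_\alpha$ I would take to be the \coNP-hardness construction underlying Corollary~\ref{simple cases}c, i.e.\ the $\mathcal{ALE}$-style reduction of Donini et al.~\cite{dolenahonuma92} applied to the DNF $\theta(\alpha,\cdot)$: a shallow $\Box/\Diamond$ skeleton forcing a team that represents all $\vec y$-assignments, with $\false$ at the leaves turning satisfiability into validity of the matrix. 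The reason for routing $\alpha$ through the $\nor$-branches rather than through propositional literals is that the construction must also survive in the monotone range $\{\Box,\Diamond,\wedge,\false,\nor\}$, where by Lemma~\ref{dep without negation} satisfiability is blind to the actual values of atoms; thus both the existential choice of $\vec x$ and the universal sweep over $\vec y$ must be realised \emph{structurally}, and the $\nor$ at $x_i$ must literally toggle between the two skeleton fragments for $x_i=0$ and $x_i=1$. One reduction living in $\{\Box,\Diamond,\wedge,\false,\nor\}$ then settles everything: it applies directly to the monotone range, transfers to the first range via $\false\equiv p\wedge\neg p$, and, introducing no dependence atoms, establishes hardness of $\MDLSATk[M]$ at the same time.

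I expect the technical heart to be precisely this monotone, structural gadget: arranging the $\Box/\Diamond/\false$ skeleton so that its satisfiability captures DNF-validity, ensuring that fixing $x_i$ edits only the terms in which it occurs and can therefore be effected by a single $\nor$ per variable without the branches interfering, and keeping $\widehat\Phi$ of polynomial size while its $\nor$-expansion ranges over $2^m$ instances. Correctness would have to be verified inside the team semantics, using the downward closure property to pass to a single-world starting team while checking that this neither destroys the forced branching over $\vec y$ nor merges the intended $\nor$-alternatives. With the gadget in hand, the equivalence ``$\Phi$ true $\iff\widehat\Phi$ satisfiable'' together with the upper bound gives \SigmaTwo-completeness of both $\MDLSAT[M]$ and $\MDLSATk[M]$.
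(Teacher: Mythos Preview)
Your plan matches the paper's: identical upper bound, and for hardness the same two-layer structure where $\nor$ absorbs one quantifier block and a $\nor$-free poor man's gadget handles the other. The paper runs it dually, however: it reduces the $\PiTwo$-complete problem \SpecialQBF of \cite{babocrrescvo10} to \emph{un}satisfiability, so the $\nor$ encodes the outer~$\forall$ (since $\bignor{v}\varphi_v$ is unsatisfiable iff every $\varphi_v$ is) and unsatisfiability of the $\nor$-free $\varphi_v$ captures the inner~$\exists$. Two remarks on the gadget you leave open. First, its mechanism is not ``forcing a team over all $\vec y$-assignments'': by downward closure everything happens at singleton teams, i.e.\ in plain modal logic, and the real engine is the elementary fact (Claim~1 in the paper's proof) that $\bigwedge_i\Diamond\varphi_i\wedge\bigwedge_j\Box\psi_j$ is unsatisfiable iff some $\varphi_i\wedge\bigwedge_j\psi_j$ is, iterated to strip a long modality prefix. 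Second, the paper does not invoke \cite{dolenahonuma92} as a black box but builds a bespoke formula: each variable $p_i$ contributes a conjunct with prefix $\nabla_{i1}\cdots\nabla_{im}$ where $\nabla_{ij}=\Diamond$ iff $p_i\in C_j$, the prefix is doubled to enforce the one-in-three constraint, and the $\nor$ attached to a universal $p_i$ swaps this prefix for an all-$\Box$ one. Your $\exists\forall$-DNF route is viable in principle, but the one-in-three source is chosen precisely because it makes each $\nor$ a clean local replacement of a single conjunct; with a generic DNF matrix you will have to engineer that locality yourself rather than lift it from \cite{dolenahonuma92}.
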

\begin{proof}
Proving the upper bound for the second case reduces to proving the upper bound for the first case by Lemma~\ref{dep without negation}. For the first case it holds with Lemma~\ref{bullet distributivity} that $\MDLSAT[\Box,\Diamond,\wedge,\aneg,\true,\false,\nor]\in \existOperator\coNP=\SigmaTwo$ since $\MDLSAT[\Box,\Diamond,\wedge,\aneg,\true,\false]\in \coNP$. The latter follows directly from Ladner's \PSPACE-algo\-rithm for modal logic satisfiability \cite{la77} which is in fact a \coNP-algorithm in the case without disjunction.

For the lower bound we consider the quantified constraint satisfaction problem \SpecialQBF shown to be \PiTwo-complete by Bauland et al.~\cite{babocrrescvo10}. This problem can be reduced to the complement of $\MDLSAT[\Box,\Diamond,\wedge,\aneg/\false,\nor]$ in polynomial time.

An instance of \SpecialQBF consists of universally quantified Boolean variables $p_1,\dots,\allowbreak p_k$, existentially quantified Boolean variables $p_{k+1},\dots,p_n$ and a set of clauses each consisting of exactly three of those variables. \SpecialQBF is the set of all those instances for which for every truth assignment for $p_1,\dots,p_k$ there is a truth assignment for $p_{k+1},\dots,p_n$ such that in each clause exactly one variable evaluates to true.\footnote{For our reduction it is necessary that in each clause the variables are pairwise different whereas in \SpecialQBF this need not be the case. However, the \PiTwo-hardness proof can easily be adapted to account for this.}

For the reduction from \SpecialQBF to the complement of $\MDLSAT[\Box,\Diamond,\allowbreak\wedge,\allowbreak\aneg/\false,\allowbreak\nor]$ we extend a technique from the \coNP-hardness proof for $\MDLSAT[\Box,\allowbreak\Diamond,\wedge,\false]$ by Donini et al.~\cite[Theorem~3.3]{dolenahonuma92}. Let $p_1,\dots,p_k$ be the universally quantified and $p_{k+1},\dots,p_n$ the existentially quantified variables of a \SpecialQBF instance and let $C_1,\dots,C_m$ be its clauses (we assume w.l.o.g.~that each variable occurs in at least one clause). Then the corresponding $\MDL[\Box,\Diamond,\wedge,\allowbreak\false,\allowbreak\nor]$ formula is
\[\vspace{-0.3ex}\begin{array}{lcclllll}
\varphi:=&&\bigwedge\limits_{i=1}^k\;\big(&&\nabla_{i1}\dots\nabla_{im}&\nabla_{i1}\dots\nabla_{im}&\Box^{i-1}\Diamond\Box^{k-i}& p\\
&&&\nor&\Box^{m}&\Box^{m}&\Box^{i-1}\Diamond\Box^{k-i}& p\,\big)\\
&\wedge&\bigwedge\limits_{i=k+1}^n&&\nabla_{i1}\dots\nabla_{im}&\nabla_{i1}\dots\nabla_{im}&\Box^k&p\\
&\wedge&&&\Box^m&\Box^m&\Box^k&\false
\end{array}\vspace{-0.3ex}\]
where $p$ is an arbitrary atomic proposition and $\nabla_{ij} := \left\{\begin{array}{l@{\quad}l}\Diamond&\text{if $p_i \in C_j$}\\\Box&\text{else}\end{array}\right.$.

For the corresponding $\MDL[\Box,\Diamond,\wedge,\aneg,\nor]$ formula replace every $\false$ with $\neg p$.

To prove the correctness of our reduction we will need two claims.

\noindent\textbf{Claim~1.} For $r,s\geq0$ a $\MDL[\Box,\Diamond, \wedge,\aneg,\true, \false]$ formula $\Diamond\varphi_1\wedge\dots\wedge\Diamond\varphi_r\wedge\Box\psi_1\wedge\dots\wedge\Box\psi_s$ is unsatisfiable iff there is an $i\in\{1,\dots,r\}$ such that $\varphi_i\wedge\psi_1\wedge\dots\wedge\psi_s$ is unsatisfiable.

\noindent\emph{Proof of Claim~1.}
``$\Leftarrow$'': If $\varphi_i\wedge\psi_1\wedge\dots\wedge\psi_s$ is unsatisfiable, so is $\Diamond\varphi_i\wedge\Box\psi_1\wedge\dots\wedge\Box\psi_s$ and even more $\Diamond\varphi_1\wedge\dots\wedge\Diamond\varphi_r\wedge\Box\psi_1\wedge\dots\wedge\Box\psi_s$.

``$\Rightarrow$: Suppose that $\varphi_i\wedge\psi_1\wedge\dots\wedge\psi_s$ is satisfiable for all $i\in\{1,\dots,r\}$. Then $\Diamond\varphi_1\wedge\dots\wedge\Diamond\varphi_r\wedge\Box\psi_1\wedge\dots\wedge\Box\psi_s$ is satisfiable in a frame that consists of a root state and for each $i\in\{1,\dots,r\}$ a separate branch, reachable from the root in one step, which satisfies $\varphi_i\wedge\psi_1\wedge\dots\wedge\psi_s$.
\hfill$<<$

Note that $\Diamond\varphi_1\wedge\dots\wedge\Diamond\varphi_r\wedge\Box\psi_1\wedge\dots\wedge\Box\psi_s$ is always satisfiable if $r=0$.

\noindent\textbf{Definition.}
Let $v:\{p_1,\dots,p_k\}\to\{0,1\}$ be a valuation of $\{p_1,\dots,p_k\}$. Then $\varphi_v$ denotes the $\MDL[\Box,\Diamond,\wedge,\aneg/\false]$ formula
\[\begin{array}{cclllc}
&\bigwedge\limits_{\substack{i\in\{1,\dots,k\},\\v(p_i)=1}}&\nabla_{i1}\dots\nabla_{im}&\nabla_{i1}\dots\nabla_{im}&\Box^{i-1}\Diamond\Box^{k-i}& p\\
\wedge&\bigwedge\limits_{\substack{i\in\{1,\dots,k\},\\v(p_i)=0}}&\Box^{m}&\Box^{m}&\Box^{i-1}\Diamond\Box^{k-i}& p\\
\wedge&\bigwedge\limits_{i=k+1}^n&\nabla_{i1}\dots\nabla_{im}&\nabla_{i1}\dots\nabla_{im}&\Box^k&p\\
\wedge&&\Box^m&\Box^m&\Box^k&\neg p\,/\,\false
\end{array}\]

\noindent\textbf{Claim~2.}
Let $v:\{p_1,\dots,p_k\}\to\{0,1\}$ be a valuation. Then $\varphi_v$ is unsatisfiable iff $v$ can be continued to a valuation $v':\{p_1,\dots,p_n\}\to\{0,1\}$ such that in each of the clauses $\{C_1,\dots,C_m\}$ exactly one variable evaluates to true under $v'$.

\noindent\emph{Proof of Claim~2.}
By iterated use of Claim~1, $\varphi_v$ is unsatisfiable iff there are $i_1,\dots,i_{2m}$ with
\[\begin{array}{lcl}
i_j\in&&\big\{i\in\{1,\dots,n\}\mid \nabla_{ij'}=\Diamond\big\}\setminus\big\{i\in\{1,\dots,k\}\mid v(p_i)=0\big\}\\
&=&\big\{i\in\{1,\dots,n\}\mid p_i\in C_{j'}\big\}\setminus\big\{i\in\{1,\dots,k\}\mid v(p_i)=0\big\},
\end{array}\]
where $j':=
\left\{\begin{array}{l@{\;\;}l}j&\text{if $j\leq m$}\\j-m&\text{else}\end{array}\right.$,\quad
such that
\[\begin{array}{lcclc}
\varphi_v(i_1,\dots,i_{2m}):=&&\bigwedge\limits_{\substack{i\in\{1,\dots,k\},\\i\in\{i_1,\dots,i_{2m}\},\\v(p_i)=1}}&\Box^{i-1}\Diamond\Box^{k-i}& p\\
&\wedge&\bigwedge\limits_{\substack{i\in\{1,\dots,k\},\\v(p_i)=0}}&\Box^{i-1}\Diamond\Box^{k-i}& p\\
&\wedge&\bigwedge\limits_{\substack{i\in\{k+1,\dots,n\},\\i\in\{i_1,\dots,i_{2m}\}}}&\Box^k&p\\
&\wedge&&\Box^k&\neg p\,/\,\false
\end{array}\]
is unsatisfiable $(i)$ and such that there are no $a,b\in\{1,\dots,2m\}$ with $a<b$, $\nabla_{i_ba'}=\nabla_{i_bb'}=\Diamond$ (this is the case iff $p_{i_b}\in C_{a'}$ and $p_{i_b}\in C_{b'}$) and $i_a \neq i_b$ $(ii)$. The latter condition is already implied by Claim~1 as it simply ensures that no subformula is selected after it has already been discarded in an earlier step.
Note that $\varphi_v(i_1,\dots,i_{2m})$ is unsatisfiable iff for all $i\in\{1,\dots,k\}$: $v(p_i)=1$ and $i\in\{i_1,\dots,i_{2m}\}$\quad or\quad $v(p_i)=0$ (and $i\notin\{i_1,\dots,i_{2m}\}$) $(i')$.

We are now able to prove the claim.

``$\Leftarrow$'': For $j=1,\dots,2m$ choose $i_j\in\{1,\dots,n\}$ such that $p_{i_j}\in C_{j'}$ and $v'(p_{i_j})=1$. By assumption, all $i_j$ exist and are uniquely determined. Hence, for all $i\in\{1,\dots,k\}$ we have that $v(p_i)=0$ (and then $i\notin\{i_1,\dots,i_{2m}\}$) or $v(p_i)=1$ and there is a $j$ such that $i_j=i$ (because each variable occurs in at least one clause). Therefore condition $(i')$ is satisfied. Now suppose there are $a<b$ that violate condition $(ii)$. By definition of $i_b$ it holds that $p_{i_b}\in C_{b'}$ and $v'(p_{i_b})=1$. Analogously, $p_{i_a}\in C_{a'}$ and $v'(p_{i_a})=1$. By the supposition $p_{i_b}\in C_{a'}$ and $p_{i_a}\neq p_{i_b}$. But since $v'(p_{i_a})=v'(p_{i_b})=1$, that is a contradiction to the fact that in clause $C_{a'}$ only one variable evaluates to true.

``$\Rightarrow$'': If $\varphi_v$ is unsatisfiable, there are $i_1,\dots,i_{2m}$ such that $(i')$ and $(ii)$ hold. Let the valuation $v':\{p_1,\dots,p_n\}\to\{0,1\}$ be defined by
\[v'(p_i):=\left\{\begin{array}{l@{\ }l}1&\text{if $i\in\{i_1,\dots,i_{2m}\}$}\\0&\text{else}\end{array}\right..\]
Note that $v'$ is a continuation of $v$ because $(i')$ holds.

We will now prove that in each of the clauses $C_1,\dots,C_m$ exactly one variable evaluates to true under $v'$. Therefore let $j\in\{1,\dots,m\}$ be arbitrarily chosen.

By choice of $i_j$ it holds that $p_{i_j}\in C_j$. It follows by definition of $v'$ that $v'(p_{i_j})=1$. Hence, there is at least one variable in $C_j$ that evaluates to true.

Now suppose that besides $p_{i_j}$ another variable in $C_j$ evaluates to true. Then by definition of $v'$ it follows that there is a $\ell\in\{1,\dots,2m\}$, $\ell\neq j$, such that this other variable is $p_{i_\ell}$. We now consider two cases.

\emph{Case~$j < \ell$}:
This is a contradiction to $(ii)$ since, by definition of $\ell$, $p_{i_\ell}$ is in $C_{j'}$ as well as, by definition of $i_\ell$, in $C_{\ell'}$ and $i_j\neq i_\ell$.

\emph{Case~$\ell< j$}:
Since $j\in\{1,\dots,m\}$ it follows that $\ell\leq m$. Since $C_{\ell'}=C_{(\ell+m)'}$ it holds that $p_{i_{\ell+m}}\in C_{\ell'}$ and $p_{i_{\ell+m}}\in C_{(\ell+m)'}$. Furthermore $\ell<\ell+m$ and thus, by condition $(ii)$, it must hold that $i_\ell= i_{\ell+m}$. Therefore $p_{i_{\ell+m}}\in C_j$ and $v'(p_{i_{\ell+m}})=1$. Because $j<\ell+m$ this is a contradiction to condition $(ii)$ as in the first case.
\hfill$<<$

The correctness of the reduction now follows with the observation that $\varphi$ is equivalent to\\
$\bignordisplay{v:{\{p_1,\dots,p_k\}}\to\{0,1\}} \varphi_v$ and that $\varphi$ is unsatisfiable iff $\varphi_v$ is unsatisfiable for all valuations $v:{\{p_1,\dots,p_k\}}\to\{0,1\}$.

The \SpecialQBF instance is true iff every valuation $v:\{p_1,\dots,p_k\}\to \{0,1\}$ can be continued to a valuation $v':\{p_1,\dots,p_n\}\to\{0,1\}$ such that in each of the clauses $\{C_1,\dots,C_m\}$ exactly one variable evaluates to true under $v'$ iff, by Claim~2, $\varphi_{v}$ is unsatisfiable for all $v:\{p_1,\dots,p_k\}\to\{0,1\}$ iff, by the above observation, $\varphi$ is unsatisfiable.
\end{proof}

Next we turn to (non-monotone) poor man's logic.

\begin{theorem}\label{poor man dep complexity}
If $\{\Box,\Diamond,\wedge,\aneg,\dep{}\}\subseteq M$ then \MDLSAT[M ] is \NEXPTIME-complete.
\end{theorem}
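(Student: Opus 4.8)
The upper bound is the easy half. Every formula of the fragment is an ordinary \MDL formula over V\"a\"an\"anen's operators, and Sevenster~\cite{se09} showed that satisfiability of such formulas lies in \NEXPTIME; hence $\MDLSAT[M']\in\NEXPTIME$ for every $M'$ not containing $\nor$. For an $M$ that additionally contains $\nor$ I would peel off classical disjunction with Lemma~\ref{bullet distributivity}b: from $\MDLSAT[M\setminus\{\nor\}]\in\NEXPTIME$ together with $\existOperator\NEXPTIME=\NEXPTIME$ we obtain $\MDLSAT[M]\in\NEXPTIME$. Thus the whole theorem reduces to proving \NEXPTIME-hardness already for the disjunction-free fragment $\MDLSAT[\Box,\Diamond,\wedge,\aneg,\dep{}]$.

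For the lower bound the plan is to reduce from the \NEXPTIME-complete satisfiability problem for dependency quantified Boolean formulas in conjunctive normal form, \DQBFCNF. Such an instance consists of universal variables $x_1,\dots,x_n$, existential variables $y_1,\dots,y_m$ each carrying a dependency set $X_i\subseteq\{x_1,\dots,x_n\}$, and a \cnf matrix $\phi=\bigwedge_j C_j$; it is a yes-instance iff there are Skolem functions $y_i=f_i(X_i)$ so that $\phi$ holds under every assignment to $x_1,\dots,x_n$. Team semantics fits this notion exactly: if the worlds of a team $T$ are read as assignments to all variables, then $\dep[X_i]{y_i}$ holds in $T$ precisely when $y_i$ is functionally determined by $X_i$ across $T$, that is, when a Skolem function for $y_i$ exists. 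A yes-instance should therefore correspond to a team $T$ with (i) its projection to $x_1,\dots,x_n$ surjective onto $\{0,1\}^n$, (ii) $\dep[X_i]{y_i}$ true in $T$ for every $i$, and (iii) every clause $C_j$ satisfied in every world of $T$.

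Condition (ii) is for free: it is the conjunction $\bigwedge_i \dep[X_i]{y_i}$ and lives in the poor man's fragment without difficulty. The work is to enforce (i) and (iii) with no disjunction available. For (i) I would force a model to unfold a full binary assignment tree of depth $n$: at depth $i-1$ every world of the current team must branch on both truth values of $x_i$, which the disjunction-free formula $\Box^{i-1}(\Diamond x_i\wedge\Diamond\neg x_i)$ demands (recall that in team semantics $\Diamond\psi$ asserts a covering successor team on which $\psi$ holds, so the two diamonds force each world to have both an $x_i$- and a $\neg x_i$-successor), supplemented by propagation constraints that lock the value of each $x_i$ into the whole subtree beneath it, so that the $2^n$ leaves realise all assignments. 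For (iii) I would adapt the poor man's clause gadget from the proof of Theorem~\ref{poor man bullet complexity}, i.e.\ the $\Box^{i-1}\Diamond\Box^{\,\cdot}$ device of Donini et al.\ that encodes satisfaction of $3$-clauses through the $\Box$/$\Diamond$ branching pattern rather than through an explicit disjunction, now read off at the leaves and driven by each leaf's values of the variables occurring in $C_j$.

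The heart of the argument, and the step I expect to be the main obstacle, is precisely the forcing of an exponentially large, fully covering team by a polynomial-size disjunction-free formula; this is what lifts the complexity from the \coNP of ordinary poor man's logic all the way to \NEXPTIME. It must be engineered so that the branching, the value-locking and the dependence atoms cooperate, and so that no model can cheat either by collapsing two branches or by omitting some assignment to $x_1,\dots,x_n$, while the team still stays downward closed as required by the downward closure property. I would establish correctness through a structural claim in the style of Claims~1 and~2 in the proof of Theorem~\ref{poor man bullet complexity}: in one direction, that any model of the constructed formula contains a subteam whose projection to $x_1,\dots,x_n$ is all of $\{0,1\}^n$ and on which the dependence atoms supply the required Skolem functions; in the other, that Skolem functions witnessing the \DQBFCNF instance can be spread over the $2^n$ leaves of the tree to build a model. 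The delicate point in that verification is the clause gadget, namely showing that its poor man's encoding of $C_j$ is true at a leaf exactly when that leaf's assignment satisfies $C_j$, uniformly across all $2^n$ leaves and compatibly with the enforced dependencies.
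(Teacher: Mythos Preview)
Your upper bound argument matches the paper exactly, and the overall shape of your lower bound is also right: the paper reduces from \DQBFCNF, builds a depth-$n$ binary tree whose leaves carry all $2^n$ assignments (your part~(i) is literally the paper's conjunct $\bigwedge_i \Box^{i-1}(\Diamond\Box^{n-i}p_i\wedge\Diamond\Box^{n-i}\overline{p_i})$), and uses $\bigwedge_i\dep[P_i]{p_i}$ on the leaf team reached by $\Box^k\Diamond^{n-k}$ to express the Skolem-function constraints.

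The gap is your clause encoding. The $\Box^{i-1}\Diamond\Box^{\,\cdot}$ device from Theorem~\ref{poor man bullet complexity} does not test whether a \emph{given} assignment satisfies a clause; it encodes clause membership in the modality pattern at the root and, via Claim~1, turns unsatisfiability of a conjunction into a global 1-in-3 selection. Here you need, at each of the $2^n$ leaves simultaneously, the propositional fact $l_{j1}\vee l_{j2}\vee l_{j3}$, where the literal values are already fixed by that leaf's labeling; a modality pattern below a leaf cannot read those labels, so the Donini gadget has nothing to act on. The paper solves this with a \emph{second} use of dependence that your proposal does not anticipate: it introduces fresh atoms $f_1,\dots,f_m$, forces each $f_j$ to be a Boolean function of the three literal variables of $C_j$ via $\Box^n\dep[l'_{j1},l'_{j2},l'_{j3}]{f_j}$, and pins that function to $\true$ on the all-false input via $\Diamond^n(\overline{l_{j1}}\wedge\overline{l_{j2}}\wedge\overline{l_{j3}}\wedge f_j)$. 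Then $\overline{f_j}$ is a disjunction-free surrogate for the clause, and the matrix inside $\Box^k\Diamond^{n-k}$ is simply $\overline{f_1}\wedge\dots\wedge\overline{f_m}\wedge\bigwedge_i\dep[P_i]{p_i}$. This double role of $\dep{}$ --- once for the Skolem functions, once to manufacture the forbidden $\vee$ --- is the missing idea; it is also what makes the correctness argument work when the model is wider than the intended binary tree, since the global $\Box^n\dep{\cdot}$ constraints let one replace any stray leaf by its canonical copy inside the tree.
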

\begin{proof}
Sevenster showed that the problem is in \NEXPTIME in the case of $\nor\notin M$ \cite[Lemma~14]{se09}. Together with Lemma~\ref{bullet distributivity} and the fact that $\existOperator\NEXPTIME=\NEXPTIME$ the upper bound applies.

For the lower bound we reduce \DQBFCNF, which was shown to be \NEXPTIME-hard by Peterson et al.~\cite[Lemma~5.2.2]{pereaz01}\footnote{Peterson et al.~showed \NEXPTIME-hardness for \dqbf without the restriction that the formulae must be in \cnf. However, the restriction does not lower the complexity since every propositional formula is satisfiability-equivalent to a formula in \cnf whose size is bounded by a polynomial in the size of the original formula.}, to our problem.

An instance of \DQBFCNF consists of universally quantified Boolean variables $p_1,\allowbreak \dots,\allowbreak p_k$, existentially quantified Boolean variables $p_{k+1},\dots,p_n$, dependence constraints $P_{k+1},\allowbreak \dots,\allowbreak P_n\subseteq\{p_1,\dots,p_k\}$ and a set of clauses each consisting of three (not necessarily distinct) literals. Here, $P_i$ intuitively states that the value of $p_i$ only depends on the values of the variables in $P_i$. Now, \DQBFCNF is the set of all those instances for which there is a collection of functions $f_{k+1},\dots,f_n$ with $f_i:\{0,1\}^{P_i}\to\{0,1\}$ such that for every valuation $v:\{p_1,\dots,p_k\}\to\{0,1\}$ there is at least one literal in each clause that evaluates to true under the valuation $v':\{p_1,\dots,p_n\}\to\{0,1\}$ defined by
\[v'(p_i):=\left\{\begin{array}{l@{\quad}l}
v(p_i)&\text{if $i\in\{1,\dots,k\}$}\\
f_i(v\upharpoonright P_i)&\text{if $i\in\{k+1,\dots,n\}$}
\end{array}\right.\enspace.\]

The functions $f_{k+1},\dots,f_n$ act as restricted existential quantifiers, i.e., for an $i\in\{k+1,\dots,n\}$ the variable $p_i$ can be assumed to be existentially quantified dependent on all universally quantified variables in $P_i$ (and, more importantly, independent of all universally quantified variables not in $P_i$).
Dependencies are thus explicitly specified through the dependence constraints and can contain -- but are not limited to -- the traditional sequential dependencies, e.g.~the quantifier sequence $\forall p_1 \exists p_2 \forall p_3 \exists p_4$ can be modeled by the dependence constraints $P_2=\{p_1\}$ and $P_4=\{p_1,p_3\}$.

For the reduction from \DQBFCNF to $\MDLSAT[\Box,\Diamond,\allowbreak\wedge,\allowbreak\aneg,\dep{}]$ we use an idea from Hemaspaandra \cite[Theorem~4.2]{he01}. There, \PSPACE-hardness of $\MDLSAT[\Box,\Diamond,\allowbreak\wedge,\allowbreak\aneg]$ over the class $\mathcal F_{\leq 2}$ of all Kripke structures in which every world has at most two successors is shown. The crucial point in the proof is to ensure that every Kripke structure satisfying the constructed $\MDL[\Box,\Diamond,\wedge,\aneg]$ formula adheres to the structure of a complete binary tree and does not contain anything more than this tree. In the class $\mathcal F_{\leq 2}$ this is automatically the case since in a complete binary tree all worlds already have two successors.

Although in our case there is no such a priori restriction and therefore we cannot make sure that every satisfying structure is not more than a binary tree, we are able to use dependence atoms to ensure that everything in the structure that does not belong to the tree is essentially nothing else than a copy of a subtree. This will be enough to show the desired reducibility.

Let $p_1,\dots,p_k$ be the universally quantified and $p_{k+1},\dots,p_n$ the existentially quantified variables of a \DQBFCNF instance $\varphi$ and let $P_{k+1},\dots,P_n$ be its dependence constraints and $\{l_{11},l_{12},l_{13}\},\dots,\allowbreak\{l_{m1},\allowbreak l_{m2},l_{m3}\}$ its clauses. Then the corresponding $\MDL[\Box,\Diamond,\allowbreak\wedge,\allowbreak\aneg,\dep{}]$ formula is
\[\begin{array}{lcl@{\quad}c}
g(\varphi):=&&\bigwedge\limits_{i=1}^n \Box^{i-1}(\Diamond \Box^{n-i}p_i \wedge \Diamond \Box^{n-i}\overline{p_i})&(i)\\
&\wedge&\bigwedge\limits_{i=1}^m \Diamond^n (\overline{l_{i1}} \wedge \overline{l_{i2}} \wedge \overline{l_{i3}} \wedge f_i)&(ii)\\
&\wedge&\bigwedge\limits_{i=1}^m \Box^n \dep[l'_{i1},l'_{i2},l'_{i3}]{f_i}&(iii)\\
&\wedge&\Box^k\Diamond^{n-k}\big(\overline{f_1} \wedge \dots \wedge \overline{f_m} \;\wedge\; \bigwedge_{i=k+1}^n \dep[P_i]{p_i}\big)&(iv)
\end{array}\]
where $p_1,\dots,p_n,f_1,\dots,f_m$ are atomic propositions and $l'_{ij}:=\left\{\begin{array}{l@{\quad}l}p&\text{if $l_{ij}=p$}\\p&\text{if $l_{ij}=\overline{p}$}\end{array}\right.\enspace$.

Now if $\varphi$ is valid, consider the frame which consists of a complete binary tree with $n$ levels (not counting the root) and where each of the $2^n$ possible labelings of the atomic propositions $p_1,\dots,p_n$ occurs in exactly one leaf. Additionally, for each $i\in\{1,\dots, m\}$ $f_i$ is labeled in exactly those leaves in which $l_{i1}\vee l_{i2}\vee l_{i3}$ is false. This frame obviously satisfies $(i)$, $(ii)$ and $(iii)$. And since the modalities in $(iv)$ model the quantors of $\varphi$, $\overline{f_i}$ is true exactly in the leaves in which $l_{i1}\vee l_{i2}\vee l_{i3}$ is true and the \dep{} atoms in $(iv)$ model the dependence constraints of $\varphi$, $(iv)$ is also true and therefore $g(\varphi)$ is satisfied in the root of the tree.

As an example see Fig.~\ref{fig:generated frame} for a frame satisfying $g(\varphi)$ if the first clause in $\varphi$ is $\{\overline{p_1},p_n\}$.
\begin{figure}[ht]
\begin{center}
\begin{tikzpicture}[auto, every state/.style={minimum size=2em}]
% Zustände
\node[state]           (z0)  at (12,10)                   {};
\node[state]           (z11) at (8,9)                     {};
\node                  (z12) at (14,8)                   {{\Large $\vdots$}};
\node[state]           (zn1) at (6,8)                     {};
\node[state,label=below:{$\begin{array}{c}p_1\\p_2\\\vdots\\p_n\end{array}$}]           (zn1a) at (5,7)                     {};
\node[state,label={below:{$\begin{array}{c}p_1\\p_2\\\vdots\\\overline{p_n}\\f_1\end{array}$}}] (zn1b) at (7,7)                     {};
\node                  (znn1) at (8,7)                    {$\cdots$};
\node[state]           (zn2) at (10,8)                     {};
\node[state,label=below:{$\begin{array}{c}p_1\\\overline{p_2}\\\vdots\\p_n\end{array}$}]           (zn2a) at (9,7)                     {};
\node[state,label={below:{$\begin{array}{c}p_1\\\overline{p_2}\\\vdots\\\overline{p_n}\\f_1\end{array}$}}] (zn2b) at (11,7)                     {};
\node                  (znn2) at (12,7)                    {{\Large $\cdots$}};

% Übergänge
\path[->]       (z0)    edge                    node {$p_1$}        (z11)
                (zn1)   edge                    node {$p_n$}        (zn1a)
                (zn1)   edge                    node {$\overline{p_n}$}        (zn1b)
                (zn2)   edge                    node {$p_n$}        (zn2a)
                (zn2)   edge                    node {$\overline{p_n}$}        (zn2b);
\path[->,dotted] (z11)   edge                    node {$p_2$}        (zn1)
                (z11)   edge                    node {$\overline{p_2}$}        (zn2)
                (z0)    edge                    node {$\overline{p_1}$}        (z12);
\end{tikzpicture}
\caption{Frame satisfying $g(\varphi)$}
\label{fig:generated frame}
\end{center}
\end{figure}
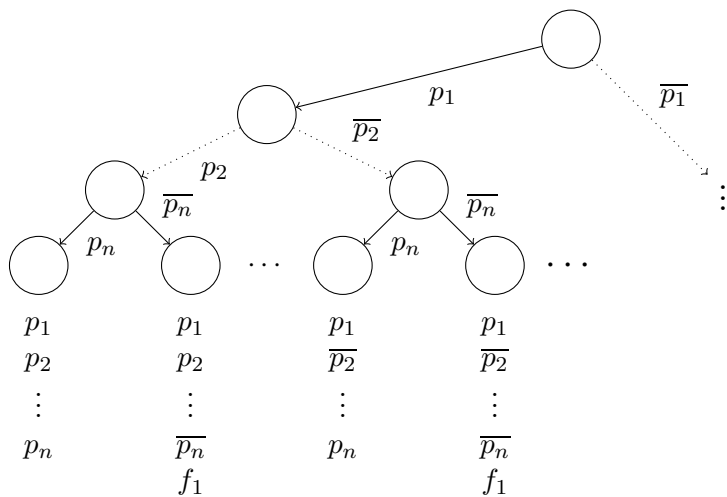

If, on the other hand, $g(\varphi)$ is satisfiable, let $W$ be a frame and $t$ a world in $W$ such that $W,\{t\}\models g(\varphi)$.
Now $(i)$ enforces $W$ to contain a complete binary tree $T$ with root $t$ such that each labeling of $p_1,\dots,p_n$ occurs in a leaf of $T$.

We can further assume w.l.o.g.~that $W$ itself is a tree since in \MDL different worlds with identical proposition labelings are indistinguishable and therefore every frame can simply be unwinded to become a tree. Since the modal depth of  $g(\varphi)$ is $n$ we can assume that the depth of $W$ is at most $n$. And since $(i)$ enforces that every path in $W$ from $t$ to a leaf has a length of at least $n$, all leaves of $W$ lie at levels greater or equal to $n$. Altogether we can assume that $W$ is a tree, that all its leaves lie at level $n$ and that it has the same root as $T$. The only difference is that the degree of $W$ may be greater than that of $T$.

But we can nonetheless assume that up to level $k$ the degree of $W$ is~2 $(*)$. This is the case because if any world up to level $k-1$ had more successors than the two lying in $T$, the additional successors could be omitted and $(i)$, $(ii)$, $(iii)$ and $(iv)$ would still be fulfilled. For $(i)$, $(ii)$ and $(iii)$ this is clear and for $(iv)$ it holds because $(iv)$ begins with $\Box^k$.

We will now show that, although $T$ may be a proper subframe of $W$, $T$ is already sufficient to fulfill $g(\varphi)$. From this the validity of $\varphi$ will follow immediately.

\noindent\textbf{Claim.}
$T,\{t\}\models g(\varphi)$.

\noindent\emph{Proof of Claim.}
We consider sets of leaves of $W$ that satisfy $\overline{f_1} \wedge \dots \wedge \allowbreak \overline{f_m} \;\wedge\;\allowbreak \bigwedge_{i=k+1}^n \dep[P_i]{p_i}$ and that can be reached from the set $\{t\}$ by the modality sequence $\Box^k\Diamond^{n-k}$. Let $S$ be such a set and let $S$ be chosen so that there is no other such set that contains less worlds outside of $T$ than $S$ does. Assume there is a $s\in S$ that does not lie in $T$.

Let $i\in\{1,\dots,m\}$ and let $s'$ be the leaf in $T$ that agrees with $s$ on the labeling of $p_1,\dots,p_n$. Then, with $W,\{s\}\models \overline{f_i}$ and $(iii)$, it follows that $W,\{s'\}\models\overline{f_i}$.

Let $S' := (S\setminus\{s\})\cup\{s'\}$. Then it follows by the previous paragraph that $W,S'\models \overline{f_1} \wedge \dots \wedge \overline{f_m} $. Since $W,S\models\bigwedge_{i=k+1}^n \dep[P_i]{p_i}$ and $s'$ agrees with $s$ on the propositions $p_1,\dots,p_n$ it follows that $W,S'\models\bigwedge_{i=k+1}^n \dep[P_i]{p_i}$. Hence, $S'$ satisfies $\overline{f_1} \wedge \dots \wedge \overline{f_m} \;\wedge\; \bigwedge_{i=k+1}^n \dep[P_i]{p_i}$ and as it only differs from $S$ by replacing $s$ with $s'$ it can be reached from $\{t\}$ by $\Box^k\Diamond^{n-k}$ because $s$ and $s'$ agree on $p_1,\dots,p_k$ and, by $(*)$, $W$ does not differ from $T$ up to level $k$. But this is a contradiction to the assumption since $S'$ contains one world less than $S$ outside of $T$. Thus, there is no $s\in S$ that does not lie in $T$ and therefore $(iv)$ is fulfilled in $T$. Since $(i)$, $(ii)$ and $(iii)$ are obviously also fulfilled in $T$, it follows that $T,\{t\}\models g(\varphi)$.
\hfill$<<$

$(ii)$ ensures that for all $i\in\{1,\dots,m\}$ there is a leaf in $W$ in which $\neg(l_{i1}\vee l_{i2}\vee l_{i3})\wedge f_i$ is true. This leaf can lie outside of $T$. However, $(iii)$ ensures that all leaves that agree on the labeling of $l_{i1}$, $l_{i2}$ and $l_{i3}$ also agree on the labeling of $f_i$. And since there is a leaf where $\neg(l_{i1}\vee l_{i2}\vee l_{i3})\wedge f_i$ is true, it follows that in all leaves, in which $\neg(l_{i1}\vee l_{i2}\vee l_{i3})$ is true, $f_i$ is true. Conversely, if $\overline{f_i}$ is true in an arbitrary leaf of $W$ then so is $l_{i1}\vee l_{i2}\vee l_{i3}$ $(**)$.

The modality sequence $\Box^k\Diamond^{n-k}$ models the quantors of $\varphi$ and $\bigwedge_{i=k+1}^n \dep[P_i]{\allowbreak p_i}$ models its dependence constraints. And so there is a bijective correspondence between sets of worlds reachable in $T$ by $\Box^k\Diamond^{n-k}$ from $\{t\}$ and that satisfy $\bigwedge_{i=k+1}^n \dep[P_i]{p_i}$ on the one hand and truth assignments to $p_1,\dots,p_n$ generated by the quantors of $\varphi$ and satisfying its dependence constraints on the other hand.
Additionally, by $(**)$ follows that $\overline{f_1}\wedge\dots\wedge\overline{f_m}$ implies $\bigwedge_{i=1}^m (l_{i1}\vee l_{i2} \vee l_{i3})$ and since $T,\{t\}\models g(\varphi)$, $\varphi$ is valid.
\end{proof}

\subsection{Cases with only one modality}

We finally examine formulas with only one modality.

\begin{theorem}\label{only one modality}
Let $M \subseteq \{\Box, \Diamond, \wedge, \vee, \aneg, \true, \false, \nor\}$ with $\Box\notin M$ or $\Diamond \notin M$. Then the following hold:
\begin{enumerate}[a)]
 \item \MDLSAT[M \cup \{\dep{}\}] $\leqpm$ \MDLSAT[M \cup \{\true, \false\}], i.e.,~adding the $\dep{}$ operator does not increase the complexity if we only have one modality.
  \item For every \MDL[M \cup\{\dep{}\}] formula $\varphi$ it holds that $\nor$ is equivalent to $\vee$, i.e.,~$\varphi$ is equivalent to every formula that is generated from $\varphi$ by replacing some or all occurrences of $\nor$ by $\vee$ and vice versa.
\end{enumerate}
\end{theorem}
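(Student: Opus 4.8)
The plan is to use the reduction to singleton evaluation teams noted above: by the downward closure property a formula is satisfiable iff it holds at some $W,\{w\}$, so for both parts it suffices to control which teams can be \emph{consulted} when evaluation starts from a singleton. I would split into the two subcases allowed by the hypothesis, $\Box\notin M$ and $\Diamond\notin M$, since the controlling mechanism differs in each.

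First the subcase $\Box\notin M$ (at most $\Diamond$). Here I would prove by structural induction that, starting from $\{w\}$, the value of every subformula only ever consults singleton (or empty) teams. The only interesting step is $\Diamond$: if $W,\{w\}\models\Diamond\chi$ then some team $T'$ satisfies $\chi$ and $w$ has a successor in $T'$, and by downward closure that single successor already forms a satisfying team, so the witness may be taken to be a singleton. For $\wedge$ and $\nor$ the team is unchanged, and for $\vee$ a split $\{w\}=T_1\cup T_2$ forces $T_1=\{w\}$ or $T_2=\{w\}$, so on a singleton $\varphi\vee\psi$ and $\varphi\nor\psi$ agree; moreover on a singleton a dependence atom is vacuously true and its negation is false, i.e.\ $\dep{}\equiv\true$ and $\neg\dep{}\equiv\false$. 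Feeding these equivalences through the induction yields both b) (replacing $\nor$ by $\vee$ or back never changes the value at any $\{w\}$) and a) (replacing every $\dep{}$ by $\true$ and every $\neg\dep{}$ by $\false$ gives an equisatisfiable $\MDL[M\cup\{\true,\false\}]$ formula).

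Second the subcase $\Diamond\notin M$ (at most $\Box$). Now teams genuinely grow under $\Box$, and in fact $\vee$ and $\nor$ are \emph{not} interchangeable on arbitrary teams, so the singleton induction fails and a different idea is needed. The key observation is that without $\Diamond$ nothing ever forces a successor to exist: on the empty successor set every formula holds, so each $\Box$-subformula is satisfied vacuously. Since atomic negation is the only negation, every subformula occurs positively; hence replacing each maximal $\Box$-subformula by $\true$ produces a modality-free formula that is implied by the original. I would use this to argue that any satisfiable formula is already satisfiable in the one-world, successor-free frame whose labeling is that of the satisfying world: the modality-free formula holds there, and back in the original all $\Box$-subformulas are then vacuously true, so the original holds as well. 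In this single-world model the team is again a singleton, where $\dep{}\equiv\true$, $\neg\dep{}\equiv\false$, and $\vee$, $\nor$ coincide; so the same syntactic substitutions as before preserve satisfiability, giving a) and b) for this subcase too.

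In both subcases the maps witnessing a) and b) are plain syntactic substitutions ($\dep{}\mapsto\true$, $\neg\dep{}\mapsto\false$, and $\nor\leftrightarrow\vee$), computable in polynomial time and landing in $\MDL[M\cup\{\true,\false\}]$, so a) is a genuine $\leqpm$-reduction. The main obstacle is the $\Box$-only subcase: there the clean ``teams stay singletons'' argument breaks down and $\vee\not\equiv\nor$ team-wise, so one must retreat to the successor-free world and read part b) as an equivalence \emph{with respect to the satisfiability problem} rather than as a team-wise semantic equivalence — which is exactly the form needed for the classification in Corollary~\ref{one modality cases}.
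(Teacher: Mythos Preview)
Your proof is correct and follows essentially the same two-case split as the paper: for $\Box\notin M$ you argue that evaluation starting from a singleton only ever consults singletons (so $\nor$ and $\vee$ coincide and $\dep{}\equiv\true$, $\neg\dep{}\equiv\false$), and for $\Diamond\notin M$ you pass to the successor-free one-world frame where every $\Box$-subformula is vacuously true. Your explicit appeal to positivity/monotonicity in the second subcase spells out a step the paper leaves implicit, and your observation that part~b) is really an \emph{equisatisfiability} statement rather than a team-semantic equivalence is well taken --- the paper's own proof only shows that ``the (un-)satisfiability of $\varphi$ is preserved,'' and indeed $p\vee\neg p$ versus $p\nor\neg p$ already shows that full team-wise equivalence fails.
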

\begin{proof}
Every negation $\neg \dep{}$ of a dependence atom is by definition always equivalent to $\false$ and can thus be replaced by the latter. For positive $\dep{}$ atoms and the $\nor$ operator we consider two cases.

\emph{Case $\Diamond\notin M$.} If an arbitrary \MDL[\Box,\wedge, \vee, \aneg, \true, \false, \dep{}, \nor] formula $\varphi$ is satisfiable then it is so in an intransitive singleton frame, i.e.~a frame that only contains one world which does not have a successor, because there every subformula that begins with a $\Box$ is automatically satisfied. In a singleton frame all $\dep{}$ atoms obviously hold and $\nor$ is equivalent to $\vee$. Therefore the (un-)satisfiability of $\varphi$ is preserved when substituting every $\dep{}$ atom in $\varphi$ with $\true$ and every $\nor$ with $\vee$ (or vice versa).

\emph{Case $\Box \notin M$.} If an arbitrary \MDL[\Diamond,\wedge, \vee, \aneg, \true, \false, \dep{}, \nor] formula $\varphi$ is satisfiable then, by the downward closure property, there is a frame $W$ with a world $s$ such that $W,\{s\} \models \varphi$. Since there is no $\Box$ in $\varphi$, every subformula of $\varphi$ is also evaluated in a singleton set (because a $\Diamond$ can never increase the cardinality of the evaluation set). And as in the former case we can replace every $\dep{}$ atom with $\true$ and every $\nor$ with $\vee$ (or vice versa).
\end{proof}

Thus we obtain the following consequences -- note that with the preceding results this takes care of all cases in Table~\ref{results}.
\begin{corollary}\label{one modality cases}
\begin{enumerate}[a)]
 \item If $\{\wedge,\aneg\}\subseteq M\subseteq\{\Box,\Diamond,\wedge,\vee,\aneg,\true,\false,\dep{},\nor\}$, $M\cap\{\vee,\nor\}\neq \emptyset$ and $|M\cap\{\Box,\Diamond\}|=1$ then \MDLSAT[M ] and \MDLSATk[M ] are \NP-complete for all $k\geq 0$.
 \item If $\{\wedge,\aneg\}\subseteq M\subseteq\{\Box,\Diamond,\wedge,\aneg,\true,\false,\dep{}\}$ and $|M\cap\{\Box,\Diamond\}|=1$ then $\MDLSAT[M ] \in\PTIME$.
 \item If $\{\wedge\}\subseteq M\subseteq\{\Box,\Diamond,\wedge,\vee,\true,\false,\dep{},\nor\}$ and $|M\cap\{\Box,\Diamond\}|=1$ then $\MDLSAT[M ] \in \PTIME$.
 \item If $\wedge \notin M$ then $\MDLSAT[M ] \in \PTIME$.
\end{enumerate}
\end{corollary}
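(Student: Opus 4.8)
The plan is to obtain all four parts as consequences of Theorem~\ref{only one modality}. Whenever exactly one modality is present, part a) of that theorem lets me delete the $\dep{}$ operator (passing to $\{\true,\false\}$) and part b) lets me replace every $\nor$ by $\vee$, so each fragment collapses to ordinary single-modality modal logic. By downward closure I may test satisfiability on a singleton team $\{w\}$, and on singletons both $\vee$ and $\nor$ behave as classical disjunction, because the empty team satisfies every formula; hence the collapse really lands me in classical modal logic with one modality, and it remains to read off the correct bound case by case.

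For a) the target after the reduction is classical modal logic with a single modality and full Boolean connectives ($\wedge$, $\vee$, atomic negation). I would show this is in \NP. The $\Box$-only case is immediate: since atomic negation never negates a box, all box-subformulas occur positively, and a successor-free world makes every box true at once, so $\varphi$ is satisfiable iff the propositional skeleton with all box-subformulas set to $\true$ is satisfiable, a single \NP\ check. The $\Diamond$-only case is the main obstacle, since natural tree models are exponential. I would instead guess a set $S$ of subformulas declared ``satisfiable'' and verify it is self-justifying: for each $\alpha\in S$ the propositional formula obtained from $\alpha$ by replacing each outermost subformula $\Diamond\beta$ by $\true$ if $\beta\in S$ and by $\false$ otherwise is propositionally satisfiable. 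Using that diamonds occur only positively, $\varphi$ is satisfiable iff some such self-justifying $S$ contains $\varphi$, and checking this is in \NP. The matching lower bound is \NP-hardness of propositional satisfiability (Cook~\cite{co71}): since $\wedge$, atomic negation and a disjunction ($\vee$, or $\nor$ via Theorem~\ref{only one modality}b) are present, I encode \SAT\ on a singleton world and ignore the modality; this uses no dependence atoms, so it also yields \NP-hardness of $\MDLSATk[M]$.

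Part b) is the same reduction but without any disjunction, i.e.\ single-modality poor man's logic, which I claim is in \PTIME. After deleting $\dep{}$ the formula is a conjunction of literals and single-modality subformulas. If only $\Box$ is present, a successor-free world satisfies every box, so satisfiability reduces to consistency of the top-level literals, a linear check. If only $\Diamond$ is present, I recurse: the literals that are direct conjuncts at a world must be consistent, and each $\Diamond$-subformula is handled independently at its own fresh successor; the absence of $\vee$ keeps everything purely conjunctive and hence polynomial. For c) the set $M$ contains no atomic negation, so Lemma~\ref{dep without negation} lets me replace every atom and dependence atom by one proposition $t$ and set $t$ true everywhere. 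The resulting monotone single-modality constant formula is decided in \PTIME: in the $\Box$-only case the successor-free world again maximises truth (all boxes true, $t$ true, $\false$ false), and in the $\Diamond$-only case all teams stay singletons, so I evaluate a bottom-up predicate with $\mathrm{Sat}(\Diamond\alpha)=\mathrm{Sat}(\alpha)$, $\mathrm{Sat}(\alpha\wedge\beta)=\mathrm{Sat}(\alpha)\wedge\mathrm{Sat}(\beta)$ (independent successors make conjunctions of satisfiable diamond-formulas satisfiable), and $\mathrm{Sat}(\alpha\vee\beta)=\mathrm{Sat}(\alpha\nor\beta)=\mathrm{Sat}(\alpha)\vee\mathrm{Sat}(\beta)$.

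Part d) allows both modalities but forbids $\wedge$, and here I would give a direct linear-time recursion, again justified by testing a singleton team. The key observations are: $\Box\psi$ is \emph{always} satisfiable (a successor-free world makes it vacuously true), so I never recurse into a box; $\Diamond\psi$ is satisfiable iff $\psi$ is; both $\vee$ and $\nor$ are disjunctive, since on a nonempty team one disjunct must already hold while the empty team satisfies everything; and among the atoms exactly $\false$ and every $\neg\dep{}$ are unsatisfiable, whereas $\true$, $p$, $\neg p$ and every positive dependence atom are satisfiable on a singleton. Computing this predicate bottom-up over the formula tree puts $\MDLSAT[M]$ in \PTIME. The only genuinely non-routine step in the whole corollary is the \NP\ upper bound for the $\Diamond$-only full-Boolean fragment in a); everything else follows by assembling Theorem~\ref{only one modality}, Lemma~\ref{dep without negation}, and the elementary recursions above.
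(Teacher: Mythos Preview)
Your proposal is correct and follows the same high-level strategy as the paper: use Theorem~\ref{only one modality} to eliminate $\dep{}$ and to identify $\nor$ with $\vee$, thereby collapsing each fragment to ordinary single-modality modal logic, and then settle the base cases. The difference is purely in how the base cases are handled. The paper simply cites Hemaspaandra~\cite{he01} (her Theorems~6.2(2) and~6.4(b--f)) for the \NP- and \PTIME-bounds of single-modality modal logic, whereas you reconstruct these results from scratch: the successor-free world trick for $\Box$-only fragments, the ``self-justifying set of satisfiable subformulas'' \NP-certificate for the $\Diamond$-only Boolean fragment, and the bottom-up recursions for the disjunction-free and negation-free cases. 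For part~c) you also take a slightly different shortcut, invoking Lemma~\ref{dep without negation} to pass to a one-variable monotone formula rather than going through Theorem~\ref{only one modality}; this is equally valid. For part~d) your direct recursion is essentially the paper's own argument.

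What each approach buys: the paper's version is a two-line corollary once \cite{he01} is taken as a black box; yours is self-contained and makes explicit why the single-modality bounds hold (in particular, your observation that all $\Box$'s and all $\Diamond$'s occur positively under atomic negation is the key structural fact driving every case). The one place you correctly flag as non-routine --- the \NP upper bound for $\Diamond$-only with full Boolean connectives --- is exactly the content of \cite[Theorem~6.2(2)]{he01}, and your certificate argument (guess $S$, verify each skeleton propositionally, build the witnessing model by induction on modal depth using positivity of $\Diamond$) is a clean reproof of it.
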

\begin{proof}
a) without the $\dep{}$ and $\nor$ operators is exactly \cite[Theorem~6.2(2)]{he01}. Theorem~\ref{only one modality}a,b extends the result to the case with the new operators.
b) is \cite[Theorem~6.4(c,d)]{he01} together with Theorem~\ref{only one modality}a for the $\dep{}$ operator.
c) is \cite[Theorem~6.4(e,f)]{he01} together with Theorem~\ref{only one modality}a,b.

d) without $\dep{}$ and $\nor$ is \cite[Theorem~6.4(b)]{he01}. The proof for the case with the new operators is only slightly different: Let $\varphi$ be an arbitrary \MDL[M ] formula. By the same argument as in the proof of Theorem~\ref{only one modality}b we can replace all top-level (i.e.~not lying inside a modality) occurrences of $\nor$ in $\varphi$ with $\vee$ to get the equivalent formula $\varphi'$. $\varphi'$ is of the form $\Box \psi_1\vee\dots\vee\Box\psi_k\vee\Diamond\sigma_1\vee\dots\vee\Diamond\sigma_m\vee a_1\vee\dots\vee a_s$ where every $\psi_i$ and $\sigma_i$ is a \MDL[M ] formula and every $a_i$ is an atomic formula. If $k>0$ or any $a_i$ is a literal, $\true$ or a dependence atom then $\varphi'$ is satisfiable. Otherwise it is satisfiable iff one of the $\sigma_i$ is satisfiable and this can be checked recursively in polynomial time.
\end{proof}

\subsection{Bounded arity dependence}

\begin{theorem}\label{bounded dep upper}
Let $k\geq 0$. Then the following holds:
\begin{enumerate}[a)]
 \item If $M \subseteq \{\Box, \Diamond, \wedge, \vee, \aneg, \true, \false, \dep{}\}$ then $\MDLSATk[M ] \in \PSPACE$.
 \item If $M \subseteq \{\Box, \Diamond, \wedge, \aneg, \true, \false, \dep{}\}$ then $\MDLSATk[M ] \in \SigmaThree$.
\end{enumerate}
\end{theorem}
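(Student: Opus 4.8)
The plan is to reduce both parts to the satisfiability of a $\nor$-formula over a dependence-free base fragment and then invoke the $\existOperator$-calculus together with the stated facts $\existOperator\PSPACE=\PSPACE$ and $\existOperator\PiTwo=\SigmaThree$. The engine is a normal form for bounded-arity dependence atoms: for a fixed $k$, the atom $\dep[q_1,\dots,q_k]{q_{k+1}}$ holds in a team $T$ exactly when there is a Boolean function $f\colon\{0,1\}^k\to\{0,1\}$ with $q_{k+1}=f(q_1,\dots,q_k)$ at \emph{every} world of $T$. Writing $\mathrm{EQ}_f$ for the flat formula that states $q_{k+1}=f(q_1,\dots,q_k)$ pointwise (a conjunction of $2^k$ clauses of width $k+1$, built from literals with $\wedge$ and $\vee$), the atom is therefore equivalent to $\bignor{f}\mathrm{EQ}_f$, a classical disjunction over the $2^{2^k}$ choices of $f$. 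Since $k$ is fixed, replacing every dependence atom by this constant-size gadget (and every negated dependence atom by $\false$, to which it is always equivalent) is an equivalence-preserving transformation computable in polynomial time that blows up the formula only by a constant factor.

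For part a) I would apply this transformation to an arbitrary $\MDLk[M]$ formula; since $M\subseteq\{\Box,\Diamond,\wedge,\vee,\aneg,\true,\false,\dep{}\}$ the result lies in $\MDL[\{\Box,\Diamond,\wedge,\vee,\aneg,\true,\false,\nor\}]$. Satisfiability for this operator set is in \PSPACE by Corollary~\ref{simple cases}a, so $\MDLSATk[M]$ reduces in polynomial time to a \PSPACE problem and is itself in \PSPACE. (Equivalently one drops $\nor$ to reach the ordinary modal base, which is in \PSPACE by Ladner, and reattaches $\nor$ via Lemma~\ref{bullet distributivity}b using $\existOperator\PSPACE=\PSPACE$.)

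For part b) the same transformation is applied, but now the original formula is poor man's, so after the substitution the only disjunctions $\vee$ occur inside the constant-size, modal-depth-$0$ gadgets $\mathrm{EQ}_f$, while everything else remains poor man's. Pulling $\nor$ to the front with Lemma~\ref{bullet distributivity}a, the formula is satisfiable iff one of the (exponentially many, but polynomially indexable) $\nor$-free disjuncts $\psi_i$ is satisfiable; choosing the index $i$ amounts to guessing one function $f$ per dependence atom, and this is the outer existential step that will supply the $\existOperator$. Each $\psi_i$ is then a poor man's modal formula whose former dependence atoms have frozen into fixed flat constraints $\mathrm{EQ}_f$. It remains to show that the satisfiability of such a $\psi_i$ lies in $\PiTwo$; together with $\existOperator\PiTwo=\SigmaThree$ this yields the bound.

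To place these cores in $\PiTwo$ I would refine the \coNP tableau for ordinary poor man's logic. Because there is no genuine team disjunction left at the top level, the diamond-witnesses can be constructed branch by branch and the label chosen at one world does not constrain the labels at other worlds beyond what the $\Box$-operators propagate downward; hence a model exists iff, along every root-to-node path (a sequence of polynomially many diamond choices, its length bounded by the modal depth), the conjunction of all flat constraints $\mathrm{EQ}_f$ and forced literals accumulated at that node is propositionally satisfiable. The universal quantification over these polynomially describable nodes supplies the $\forall$, while satisfiability of the accumulated bounded-width constraint set at a fixed node — which, unlike in the plain poor man's case where it is a mere literal-consistency check, is now a nontrivial (bounded-width) propositional satisfiability problem and hence needs one existential guess of a satisfying label — supplies the $\exists$. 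This $\forall\exists$ characterization is exactly $\PiTwo$ membership. The main obstacle I anticipate is making this characterization rigorous under team semantics: I must verify that the per-node label choices are genuinely independent, so that satisfiability decomposes into precisely the stated $\forall\exists$ form without hiding further alternation, and that the reduction of the $\Box$/$\Diamond$ team conditions to a polynomial-depth path condition is faithful, in particular that the exponentially large witnessing teams never have to be manipulated explicitly. This is where bounded arity is essential: it keeps each $\mathrm{EQ}_f$ of constant size, so the per-node problem stays a single-guess (\NP) bounded-width instance rather than triggering the extra alternations responsible for the \NEXPTIME behavior in the unbounded case.
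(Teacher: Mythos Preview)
Your proposal is correct and follows essentially the same route as the paper: both replace each bounded-arity dependence atom by a classical disjunction, over all Boolean functions $f$, of the propositional formula expressing $q_{k+1}\leftrightarrow f(q_1,\dots,q_k)$ (the paper cites this as Sevenster's translation, you re-derive it as a local constant-size $\nor$-gadget), then existentially guess the functions and analyze the dependence-free core. For part~b) both arguments hinge on the same observation, namely that in the resulting core the only $\vee$'s sit inside purely propositional subformulas, so Ladner's tableau has a $\forall\exists$ alternation pattern and the core lies in $\PiTwo$, giving $\existOperator\PiTwo=\SigmaThree$.
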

\begin{proof}
a) Let $\varphi \in \MDLk[M ]$. Then by \cite[Theorem~6]{se09} there is an ordinary modal logic formula $\varphi^T$ which is equivalent to $\varphi$ on singleton sets of evaluation, i.e., for all Kripke structures $W$ and states $w$ in $W$
\[W,\{w\}\models \varphi \text{\quad{}iff\quad} W,w\models \varphi^T.\]
Here $\varphi^T$ is constructed from $\varphi$ in the following way:
Let $\dep[p_{i_{1,1}},\dots,p_{i_{1,k_1}}]{p_{i_{1,k_1+1}}},\allowbreak\dots,\allowbreak\dep[p_{i_{n,1}},\dots,p_{i_{n,k_n}}]{p_{i_{n,k_n+1}}}$ be all dependence atoms occurring inside $\varphi$ (in an arbitrary order and including multiple occurrences of the same atom in $\varphi$ multiple times) and for all $j \geq 0$ let
\[B_j := \{\alpha_f(p_1,\dots,p_j) \mid f:\{\true,\false\}^j \to \{\true,\false\}\text{ is a total Boolean function}\},\]
where $\alpha_f(p_1,\dots,p_j)$ is the propositional encoding of $f$, i.e.,
\[\alpha_f(p_1,\dots,p_j) := \bigvee_{(i_1,\dots,i_j)\in f^-1(\true)}\, p_1^{i_1}\wedge\dots\wedge p_j^{i_j},\]
with $p^i:=\left\{\begin{array}{l@{\text{ if }}l}p&i=\true\\\neg p&i=\false\end{array}\right.$.
Note that for all $f:\{\true,\false\}^j \to \{\true,\false\}$ and all valuations $V:\{p_1,\dots,p_j\}\to \{\true,\false\}$ it holds that $V \models \alpha_f$ iff $f(V(p_1),\dots,V(p_j))=\true$.

Then $\varphi^T$ is defined as
\[\bigvee_{\alpha_1\in B_{k_1}} \dots \bigvee_{\alpha_n\in B_{k_n}}\,\varphi'(\alpha_1,\dots,\alpha_n),\]
where $\varphi'(\alpha_1,\dots,\alpha_n)$ is generated from $\varphi$ by replacing each dependence atom $\dep[p_{i_{\ell,1}},\allowbreak\dots,\allowbreak p_{i_{\ell,k_\ell}}]{p_{i_{\ell,k_\ell+1}}}$ with the propositional formula $\alpha_\ell(p_{i_{\ell,1}},\dots,p_{i_{\ell,k_{\ell}}}) \leftrightarrow p_{i_{\ell,k_\ell+1}}$.
Note that for all $\ell\in\{1,\dots,n\}$ we have that $|\alpha_\ell| \in \bigO{2^{k_\ell}}$ and $|B_{k_\ell}| = 2^{2^{k_\ell}}$. Therefore
\[|\varphi^T| \in \prod_{1\leq \ell \leq n} 2^{2^{k_\ell}}\,\cdot\,|\varphi|\cdot \bigO{2^{k_\ell}} \subseteq \bigO{\left(2^{2^k}\right)^n\cdot |\varphi|}.\]
This means that $\varphi^T$ is an exponentially (in the size of $\varphi$) large disjunction of terms of linear size. $\varphi^T$ is satisfiable if and only if at least one of its terms is satisfiable. Hence we can nondeterministically guess in polynomial time which one of the exponentially many terms should be satisfied and then check in deterministic polynomial space whether this one is satisfiable. The latter is possible because $\varphi'(\alpha_1,\dots,\alpha_n)$ is an ordinary modal logic formula and the satisfiability problem for this logic is in \PSPACE \cite{la77}. Altogether this leads to $\MDLSATk[M ] \in \existOperator \PSPACE = \PSPACE$.

\smallskip
b) In this case we cannot use the same argument as before without modifications since that would only lead to a \PSPACE upper bound again. The problem is that in the contruction of $\varphi^T$ we introduce the subformulas $\alpha_\ell$ and these may contain the $\vee$ operator. We can, however, salvage the construction by looking inside Ladner's \PSPACE algorithm \cite[Theorem~5.1]{la77}. For convenience we restate the algorithm in Listing~\ref{algo}. It holds for all ordinary modal logic formulas $\varphi$ that $\varphi$ is satisfiable if and only if \lstinline!satisfiable($\{\varphi\}$, $\emptyset$, $\emptyset$)!$=\true$.

\begin{lstlisting}[float=!ht,caption=Algorithm \lstinline!satisfiable($T$\, $A$\, $E$)!,label=algo]
if $T\nsubseteq \atom$ then
  choose $\psi\in T\setminus \atom$ //deterministically (but arbitrarily)
  set $T' := T\setminus \{\psi\}$
  if $\psi = \psi_1 \wedge \psi_2$ then
    return satisfiable($T' \cup \{\psi_1, \psi_2\}$, $A$, $E$)
  elseif $\psi = \psi_1 \vee \psi_2$ then
    nondeterministically existentially guess $i\in\{1,2\}$
    return satisfiable($T' \cup \{\psi_i\}$, $A$, $E$)
  elseif $\psi = \Box\psi_1$ then
    return satisfiable($T'$, $A\cup\{\psi_1\}$, $E$)
  elseif $\psi = \Diamond\psi_1$ then
    return satisfiable($T'$, $A$, $E\cup\{\psi_1\}$)
  end
else
  if $T$ is consistent then
    if $E \neq \emptyset$
      nondeterministically universally guess $\psi \in E$
      return satisfiable($A \cup \{\psi\}$, $\emptyset$, $\emptyset$)
    else
      return $\true$
    end
  else
    return $\false$
  end
end
$\textnormal{\parbox{\textwidth}{Here \atom denotes the set of atomic propositions, their negations and the constants \true{} and \false.}}$
\end{lstlisting}

The algorithm works in a top-down manner and runs in alternating polynomial time. It universally guesses when encountering a $\Box$ operator and existentially guesses when encountering a $\vee$ operator -- in all other cases it is deterministic.
Now, to check whether $\varphi^T$ is satisfiable we first existentially guess which of the exponentially many terms should be satisfied and then check whether this term $\varphi'(\alpha_1,\dots,\alpha_n)$ is satisfiable by invoking \lstinline!satisfiable($\{\varphi'(\alpha_1,\dots,\alpha_n)\}$, $\emptyset$, $\emptyset$)!.

To see that this in fact gives us a $\SigmaThree$-algorithm note that $\varphi$ does not contain any disjunctions. Hence also $\varphi'(\alpha_1,\dots,\alpha_n)$ contains no disjunctions apart from the ones that occur inside one of the subformulas $\alpha_1,\dots,\alpha_n$. Therefore the algorithm \lstinline!satisfiable! does not do any nondeterministic existential branching apart from when processing an $\alpha_i$. But in the latter case it is impossible to later nondeterministically universally branch because univeral guessing only occurs when processing a $\Box$ operator and these cannot occur inside an $\alpha_i$, since these are purely propositional formulas. Therefore the \lstinline!satisfiable! algorithm, if run on a formula $\varphi'(\alpha_1,\dots,\alpha_n)$ as input, is essentially a $\PiTwo$ algorithm. Together with the existential guessing of the term in the beginning we get that $\MDLSATk[M ] \in \existOperator \PiTwo = \SigmaThree$.
\end{proof}

\begin{theorem}\label{poor man bounded}
If $\{\Box, \Diamond, \wedge, \aneg, \dep{}\} \subseteq M$ then $\MDLSATpara[M ]{3}$ is \SigmaThree-hard.
\end{theorem}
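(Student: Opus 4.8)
The plan is to prove $\SigmaThree$-hardness by a \emph{direct} polynomial-time reduction from the $\SigmaThree$-complete problem \qbfcnf, i.e.\ the validity problem for closed quantified $3$CNF formulas with an $\exists\forall\exists$ prefix, $\psi=\exists p_1\cdots p_a\,\forall p_{a+1}\cdots p_{a+b}\,\exists p_{a+b+1}\cdots p_n\,\phi$ with matrix $\phi=\bigwedge_{i=1}^m\{l_{i1},l_{i2},l_{i3}\}$. I would build the reduction by adapting the \NEXPTIME construction of Theorem~\ref{poor man dep complexity}: the complete-binary-tree gadget and the clause-indicator propositions $f_i$ are reused unchanged, but the \emph{unbounded} Henkin-dependence atoms $\dep[P_i]{p_i}$ are dropped — plain sequential quantification needs no dependence constraints — so the only dependence atoms surviving are the arity-$3$ atoms $\dep[l'_{i1},l'_{i2},l'_{i3}]{f_i}$ encoding the clauses, which is exactly what keeps the image inside $\MDLk[\Box,\Diamond,\wedge,\aneg,\dep{}]$ for $k=3$. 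The third quantifier alternation, which the two-block \NEXPTIME reduction lacks, will be supplied by prefixing one extra existential modal block.

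Concretely, writing $c=n-a-b$ and letting $l'_{ij}$ be the variable underlying the literal $l_{ij}$, I would set
\[
\begin{array}{lcl}
g(\psi) := & & \bigwedge_{i=1}^{n}\Box^{i-1}\bigl(\Diamond\Box^{n-i}p_i\wedge\Diamond\Box^{n-i}\overline{p_i}\bigr)\\
& \wedge & \bigwedge_{i=1}^{m}\Diamond^{n}\bigl(\overline{l_{i1}}\wedge\overline{l_{i2}}\wedge\overline{l_{i3}}\wedge f_i\bigr)\\
& \wedge & \bigwedge_{i=1}^{m}\Box^{n}\dep[l'_{i1},l'_{i2},l'_{i3}]{f_i}\\
& \wedge & \Diamond^{a}\Box^{b}\Diamond^{c}\bigl(\overline{f_1}\wedge\dots\wedge\overline{f_m}\bigr).
\end{array}
\]
The first line forces every model (after unwinding to a tree) to contain the complete binary tree $T$ of depth $n$ in which all $2^n$ labelings of $p_1,\dots,p_n$ occur on leaves; the second and third lines pin each $f_i$ down to be a function of its three clause-literals that is true on exactly the leaves falsifying clause $i$, precisely by the $(**)$-argument of Theorem~\ref{poor man dep complexity}, so that $\overline{f_i}$ certifies satisfaction of clause $i$. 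In the last line the leading $\Diamond^a$ selects one setting of the outer existential block (by downward closure a $\Diamond$-chain may be driven through single worlds), $\Box^b$ opens all settings of the universal block, and $\Diamond^c$ chooses a setting of the inner existential block, while $\overline{f_1}\wedge\dots\wedge\overline{f_m}$ asserts that all clauses hold there; thus the last line transcribes $\exists\forall\exists\,\phi$. Crucially for the alternation count, the outermost $\exists$ of $\psi$ is realized jointly by the satisfiability quantifier and the $\Diamond^a$-prefix, which collapse into a single existential block, so the overall shape is $\exists\,\forall\,\exists$, matching the $\SigmaThree=\existOperator\PiTwo$ upper bound of Theorem~\ref{bounded dep upper}b.

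For the easy direction, $\psi$ valid $\Rightarrow g(\psi)$ satisfiable, I would take $T$ with each $f_i$ labelled on exactly the clause-$i$-falsifying leaves and verify the four conjuncts, using the Skolem values of the two $\exists$-blocks to pick the $\Diamond^a$- and $\Diamond^c$-witnesses. The substantial direction is $g(\psi)$ satisfiable $\Rightarrow\psi$ valid, and here I would reprove the Claim of Theorem~\ref{poor man dep complexity} for the new modality pattern $\Diamond^a\Box^b\Diamond^c$: from any model $W,\{t\}$, unwind to a tree, invoke the analogue of the $(*)$-normalization to assume out-degree $2$ down to level $a+b$, and then establish $T,\{t\}\models g(\psi)$ by a leaf-replacement argument — given a leaf-set witnessing the last conjunct in $W$, every leaf outside $T$ is swapped for the unique leaf of $T$ carrying the same $p_1,\dots,p_n$-labeling, the third conjunct guaranteeing that the swap preserves each $\overline{f_i}$. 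Reading the $(**)$-property of the last conjunct off the clean tree then yields the $\exists\forall\exists$ witnesses for $\phi$, i.e.\ the validity of $\psi$.

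The main obstacle is exactly this Claim, specifically the interaction of the leading $\Diamond^a$ with the leaf-replacement: I must prevent a larger-than-intended model from certifying the last conjunct using leaves drawn from several distinct outer- or universal-block settings. Pushing the degree-$2$ normalization all the way down to level $a+b$ (rather than only through the universal block, as in the two-block proof) is what forces all witnessing leaves to share a single outer setting and to sit under a clean binary subtree, so that the swapped leaf-set remains reachable by $\Diamond^a\Box^b\Diamond^c$. The delicate accompanying check is that this normalization still preserves the first three conjuncts — in particular that every clause retains a falsifying leaf for the second conjunct after pruning, which holds because the pruned frame still contains the full tree $T$ with all $2^n$ labelings — and that is the part I expect to require the most care.
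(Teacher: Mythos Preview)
Your reduction has a genuine gap in the hard direction: the degree-$2$ normalization down to level $a+b$ cannot in general be carried out. In Theorem~\ref{poor man dep complexity} the $(*)$-step works only because conjunct (iv) there begins with $\Box^k$, so pruning extra successors above level $k$ can never discard a witness needed for (iv). Once your conjunct (iv) begins with $\Diamond^a$, pruning an extra root-successor may delete precisely the world the leading $\Diamond$ needs, while keeping that world and dropping one of the two $T$-successors instead breaks conjunct (i). Concretely, take $a=b=c=1$ and a $3$CNF matrix $\phi$ for which $\forall p_2\,\exists p_1\,\exists p_3\,\phi$ holds but $\psi=\exists p_1\,\forall p_2\,\exists p_3\,\phi$ does not. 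Build the depth-$3$ binary tree $T$ with $f_i$ true exactly on the clause-$i$-falsifying leaves, and adjoin a third root-successor $s$ whose subtree is again binary in $p_2,p_3$ but whose leaves under each $p_2$-branch carry whichever $p_1$-value makes $\exists p_3\,\phi$ hold for that $p_2$ (setting the $f_i$ on these extra leaves consistently with (iii)). Then (i)--(iii) hold via $T$, and your (iv) holds by sending $\Diamond$ into $s$; so $g(\psi)$ is satisfiable though $\psi$ is invalid, and the reduction is incorrect.

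The missing ingredient is to force the leaves witnessing (iv) to agree on the outer-block variables. The paper does this by inserting $0$-ary dependence atoms into the last conjunct,
\[
\Diamond^{a}\Box^{b}\Diamond^{c}\bigl(\dep{p_1}\wedge\dots\wedge\dep{p_a}\ \wedge\ \overline{f_1}\wedge\dots\wedge\overline{f_m}\bigr),
\]
which stay within $\MDLpara{3}$ since they have arity $0$. These atoms enforce exactly the constancy of $p_1,\dots,p_a$ across the witnessing team that the bare $\Diamond^a$ fails to guarantee, and with them in place the leaf-replacement argument goes through without any normalization above the $\Box$-block. The paper explicitly remarks that without these atoms the modalities alone do not pin down the outer existential block---which is also why the same construction does not yield $\Sigma^p_j$-hardness for $j>3$ merely by adding further modal alternations.
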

\begin{proof}
We use the same construction as in the hardness proof for Theorem~\ref{poor man dep complexity} to reduce the problem \qbfcnf, which was shown to be \SigmaThree-complete by Wrathall~\cite[Corollary~6]{wr77}, to our problem. \qbfcnf is the set of all propositional sentences of the form
\[\exists p_1\dots\exists p_k \forall p_{k+1}\dots \forall p_\ell \exists p_{\ell+1}\dots \exists p_n \bigwedge_{i=1}^m (l_{i1}\vee l_{i2} \vee l_{i3}),\]
where the $l_{ij}$ are literals over $p_1,\dots,p_n$, which are valid.

Now let $\varphi$ be a \qbfcnf instance, let $p_1,\dots,p_n$ be its variables and let $k$, $\ell$, $m$, $(l_{ij})_{\substack{i=1,\dots,m\\j=1,2,3}}$ be as above. Then the corresponding $\MDLpara[\Box, \Diamond, \wedge, \aneg, \dep{}]{3}$ formula is
\[\begin{array}{lcl@{\quad}c}
g(\varphi):=&&\bigwedge\limits_{i=1}^n \Box^{i-1}(\Diamond \Box^{n-i}p_i \wedge \Diamond \Box^{n-i}\overline{p_i})&(i)\\
&\wedge&\bigwedge\limits_{i=1}^m \Diamond^n (\overline{l_{i1}} \wedge \overline{l_{i2}} \wedge \overline{l_{i3}} \wedge f_i)&(ii)\\
&\wedge&\bigwedge\limits_{i=1}^m \Box^n \dep[l'_{i1},l'_{i2},l'_{i3}]{f_i}&(iii)\\
&\wedge&\Diamond^k\Box^{\ell-k}\Diamond^{n-\ell}(\dep{p_1}\wedge \dots \wedge \dep{p_k}\ \wedge\ \overline{f_1} \wedge \dots \wedge \overline{f_m})&(iv)
\end{array}\]
where $p_1,\dots,p_n,f_1,\dots,f_m$ are atomic propositions and $l'_{ij}:=\left\{\begin{array}{l@{\quad}l}p&\text{if $l_{ij}=p$}\\p&\text{if $l_{ij}=\overline{p}$}\end{array}\right.\enspace$.

The proof that $g$ is a correct reduction is essentially the same as for Theorem~\ref{poor man dep complexity}. The only difference is that there we had arbitrary dependence atoms in part $(iv)$ of $g(\varphi)$ whereas here we only have 0-ary dependence atoms. This difference is due to the fact that there we had to be able to express arbitrary dependencies because we were reducing from \DQBFCNF whereas here we only have two kinds of dependencies for the existentially quantified variables: either complete constancy (for the variables that get quantified before any universal variables does) or complete freedom (for the variables that get quantified after all universal variables are already quantified). The former can be expressed by 0-ary dependence atoms and for the latter we simply omit any dependence atoms.

Note that it might seem as if with the same construction even $\Sigma^p_k$-hardness for arbitrary $k$ could be proved by having more alternations between the two modalities in part $(iv)$ of $g(\varphi)$. The reason that this does not work is that we do not really ensure that a structure fulfilling $g(\varphi)$ is not more than a binary tree, e.g.~it can happen that the root node of the tree has three successors: one in whose subtree all leaves on level $n$ are labeled with $p_1$, one in whose subtree no leaves are labeled with $p_1$ and one in whose subtree only some leaves are labeled with $p_1$. Now, the first diamond modality can branch into this third subtree and then the value of $p_1$ is not yet determined. Hence the modalities alone are not enough to express alternating dependencies and hence we need the $\dep{p_i}$ atoms in part $(iv)$ to ensure constancy.
\end{proof}

\begin{corollary}\label{bounded dep concrete}
\begin{enumerate}[a)]
 \item Let $k\geq 0$ and $\{\Box, \Diamond, \wedge, \vee, \aneg\} \subseteq M$. Then $\MDLSATk[M ]$ is \PSPACE-complete.
 \item Let $k\geq 3$ and $\{\Box, \Diamond, \wedge, \aneg, \dep{}\} \subseteq M \subseteq \{\Box, \Diamond, \wedge, \aneg, \true, \false, \dep{}, \nor\}$. Then $\MDLSATk[M ]$ is \SigmaThree-complete.
\end{enumerate}
\end{corollary}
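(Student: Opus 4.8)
The plan is to obtain both parts as bookkeeping consequences of three ingredients already established: the bounded-arity upper bounds of Theorem~\ref{bounded dep upper}, the classical-disjunction elimination of Lemma~\ref{bullet distributivity}, and the matching lower bounds (Ladner's \PSPACE-hardness of ordinary modal logic for part~(a), and Theorem~\ref{poor man bounded} for part~(b)). In each case the only genuine maneuver is to absorb a possible occurrence of $\nor$ in $M$ by passing to $M_0 := M\setminus\{\nor\}$, applying the $\nor$-free upper bound to $M_0$, and re-introducing $\nor$ through the $\existOperator$ operator via Lemma~\ref{bullet distributivity}b.

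For part~(a) I would first note that, since $M$ is a set of \MDL operators, $M_0 := M\setminus\{\nor\}\subseteq\{\Box,\Diamond,\wedge,\vee,\aneg,\true,\false,\dep{}\}$, so Theorem~\ref{bounded dep upper}a yields $\MDLSATk[M_0]\in\PSPACE$. If $\nor\notin M$ then $M=M_0$ and membership is immediate; if $\nor\in M$ then Lemma~\ref{bullet distributivity}b gives $\MDLSATk[M]\in\existOperator\PSPACE=\PSPACE$. The \PSPACE lower bound requires no new construction: Ladner's \PSPACE-hardness of ordinary modal logic satisfiability \cite{la77} (the same bound used in Corollary~\ref{simple cases}a) is witnessed by dependence-atom-free formulas over $\Box,\Diamond,\wedge,\vee,\aneg$, and these lie in $\MDLk[M]$ for every $k\geq 0$ (the arity restriction being vacuous in the absence of dependence atoms), so they reduce to $\MDLSATk[M]$ verbatim.

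For part~(b) the lower bound is delivered by Theorem~\ref{poor man bounded}: as $\{\Box,\Diamond,\wedge,\aneg,\dep{}\}\subseteq M$, that theorem makes $\MDLSATpara[M ]{3}$ already \SigmaThree-hard, and its reduction produces formulas whose dependence atoms have arity at most~$3$ (namely the atoms $\dep[l'_{i1},l'_{i2},l'_{i3}]{f_i}$ and $\dep{p_i}$). Since $k\geq 3$, every such formula is an $\MDLk[M]$ formula, so the same reduction witnesses \SigmaThree-hardness of $\MDLSATk[M]$. For the upper bound I again set $M_0:=M\setminus\{\nor\}\subseteq\{\Box,\Diamond,\wedge,\aneg,\true,\false,\dep{}\}$, obtain $\MDLSATk[M_0]\in\SigmaThree$ from Theorem~\ref{bounded dep upper}b, and apply Lemma~\ref{bullet distributivity}b to conclude $\MDLSATk[M]\in\existOperator\SigmaThree$.

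The single point that deserves care — and what I expect to be the main obstacle — is closing this last step, because $\existOperator\SigmaThree$ is \emph{not} among the collapses listed before Lemma~\ref{bullet distributivity} (only $\existOperator\coNP$, $\existOperator\PiTwo$, $\existOperator\PSPACE$ and $\existOperator\NEXPTIME$ are recorded there). The resolution is the idempotence of $\existOperator$: two successive polynomially length-bounded existential witnesses can be concatenated into a single one, so $\existOperator\existOperator\calC=\existOperator\calC$ for every class $\calC$. Combined with the stated identity $\existOperator\PiTwo=\SigmaThree$ this gives $\existOperator\SigmaThree=\existOperator\existOperator\PiTwo=\existOperator\PiTwo=\SigmaThree$, whence $\MDLSATk[M]\in\SigmaThree$. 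Intuitively, the existential guess selecting one $\nor$-disjunct (Lemma~\ref{bullet distributivity}a) merges with the existential guess that picks a satisfiable term of $\varphi^T$ inside the proof of Theorem~\ref{bounded dep upper}b, so the alternation pattern $\exists\forall\exists$ defining \SigmaThree is preserved rather than climbing to $\Sigma_4^p$.
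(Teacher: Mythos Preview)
Your proof is correct and follows exactly the paper's route: upper bounds via Theorem~\ref{bounded dep upper} plus Lemma~\ref{bullet distributivity}b, lower bounds via Ladner and Theorem~\ref{poor man bounded}. The only difference is that the paper simply asserts $\existOperator\SigmaThree=\SigmaThree$ without argument, whereas you supply the idempotence justification --- a welcome elaboration, not a divergence.
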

\begin{proof}
The lower bound for a) is due to the \PSPACE-completeness of ordinary modal logic satisfiability which was shown in \cite{la77}. The upper bound follows from Theorem~\ref{bounded dep upper}a, Lemma~\ref{bullet distributivity}b and the fact that $\existOperator \PSPACE = \PSPACE$.

The lower bound for b) is Theorem~\ref{poor man bounded}. The upper bound follows from Theorem~\ref{bounded dep upper}b, Lemma~\ref{bullet distributivity} and $\existOperator \SigmaThree = \SigmaThree$.
\end{proof}

\section{Conclusion}

In this paper we completely classified the complexity of the
satisfiability problem for modal dependence logic for all fragments of the
language defined by restricting the modal and propositional operators to a
subset of those considered by V\"a\"an\"anen and Sevenster.
Our results show a dichotomy for the $\dep{}$ operator; either the complexity jumps to \NEXPTIME-completeness when introducing $\dep{}$ or it does not increase at all -- and in the latter case the $\dep{}$ operator does not increase the expressiveness of the logic.
Intuitively, the \NEXPTIME-completeness can be understood as the complexity of guessing Boolean functions of unbounded arity.

In an earlier version \cite{lovo10} of this paper we formulated the question whether there are natural fragments of modal dependence logic where adding the dependence operator does not let the complexity of satisfiability testing jump up to \NEXPTIME but still increases the expressiveness of the logic. We can now give an answer to that question; by restricting the arity of the \dep{} operator. In this case the dependence becomes too weak to increase the complexity beyond \PSPACE. However, in the case of poor man's logic, i.e.~only disjunctions are fobidden, the complexity increases to $\SigmaThree$ when introducing dependence but it still is not as worse as for full modal logic.
Intuitively,  the complexity drops below \NEXPTIME because the Boolean functions which have to be guessed are now of a bounded arity.

In a number of precursor papers, e.\,g., \cite{le79} on propositional
logic or \cite{hescsc10} on modal logic, not only subsets of the classical
operators $\{\Box,\Diamond,\wedge,\vee,\aneg\}$ were considered but also
propositional connectives given by arbitrary Boolean functions. The main
result of Lewis, e.\,g., can be succinctly summarized as follows:
Propositional satisfiability is $\NP$-complete if and only if in the input
formulas the connective $\varphi \wedge\neg \psi$ is allowed (or can be
``implemented'' with the allowed connectives).

We consider it interesting to initiate such a more general study for modal
dependence logic and determine the computational complexity of
satisfiability if the allowed connectives are taken from a fixed class in
Post's lattice. Contrary to propositional or modal logic, however, the
semantics of such generalized formulas is not clear a priori -- for
instance, how should exclusive-or be defined in dependence logic? Even for
simple implication, there seem to be several reasonable definitions,
cf.~\cite{abva08}.

A further possibly interesting restriction of dependence logic might be to
restrict the type of functional dependence beyond simply restricting the arity. Right now, dependence just
means that there is some function whatsoever that determines the value of
a variable from the given values of certain other variables. Also here it
might be interesting to restrict the function to be taken from a fixed
class in Post's lattice, e.\,g., to be monotone or self-dual.

Finally, it seems natural to investigate the possibility of enriching classical temporal logics as \LTL, \CTL or \CTLs with dependence as some of them are extensions of classical modal logic. The questions here are of the same kind as for \MDL: expressivity, complexity, fragments, etc.

\nocite{va08}
\nocite{hescsc10}
\nocite{he05}

% \bibliographystyle{alpha}
% \bibliography{../../jabref/thi-hannover}

\newcommand{\etalchar}[1]{$^{#1}$}

\addcontentsline{toc}{section}{References}

\end{document}